\newcommand{\mbf}{\mathbf}
\newcommand{\msf}{\mathsf}
\newcommand{\mrm}{\mathrm}
\newcommand{\imp}{\rightarrow}
\newcommand{\bimp}{\leftrightarrow}
\newcommand{\sub}{\subseteq}
\newcommand{\D}{\Diamond}
\newcommand{\B}{\Box}
\newcommand{\bcut}[1]{\psset{linestyle=dashed,dash=2pt}\cut{#1}}
\newcommand{\scut}[1]{\psset{linestyle=solid}\cut{#1}}
\newcommand{\dbcut}[1]{\psset{linestyle=dashed,dash=2pt}\cut{\cut{{#1}}}}
\newcommand{\pcut}[1]{\psset{linestyle=dashed,dash=2pt}\cut{\scut{#1}}}
\newcommand{\ncut}[1]{\cut{\psset{linestyle=dashed,dash=2pt}\cut{{#1}}}}
\newtheorem{definition}{Definition}
\newtheorem{remark}{Remark}
\newtheorem{lemma}{Lemma}
\newtheorem{prop}{Proposition}
\newtheorem{theorem}{Theorem}
\title{Graphical Sequent Calculi for Modal Logics}
\author{Minghui Ma
\institute{Institute of Logic and Cognition, Sun Yat-Sen University, Guangzhou, China}
\email{mmh.thu@gmail.com}
\and
Ahti-Veikko Pietarinen
\institute{Chair of Philosophy, Tallinn University of Technology, Tallinn, Estonia}
\email{ahti.pietarinen@gmail.com}
}
\begin{document}

\maketitle

\begin{abstract}
The syntax of modal graphs is defined in terms of the continuous cut and broken cut following Charles Peirce's notation in the gamma part of his graphical logic of existential graphs. Graphical calculi for normal modal logics are developed based on a reformulation of the graphical calculus for classical propositional logic. These graphical calculi are of the nature of deep inference. The relationship between graphical calculi and sequent calculi for modal logics is shown by translations between graphs and modal formulas.
 \end{abstract}


\section{Introduction}

Sequent calculi for normal modal logics can be obtained uniformly from a basic calculus, as has been observed in~\cite{Wansing2002}. The search for generalized cut-free sequent calculi for modal logics has produced display calculus (\cite{Belnap1982}), hypersequent calculus (\cite{Avron1996}), labelled sequent calculus (\cite{Sara2005}), hybrid logic calculus (\cite{Seligman1997}), and deep sequent calculus (\cite{Brunnler2003,Brunnler2009,SS2005,Stouppa2007}).
Among these efforts, there are two main approaches. One is the semantic approach; the other largely syntactic. In the semantic approach, labelled calculi exist for a number of complete modal logics. The syntactic approach does not use labels. Each sequent has an obvious corresponding formula. Ordinary sequent calculi and hypersequent calculi for modal logics are syntactic.

Deep inference systems for modal logics, such as deep sequent calculi developed by Br\"unnler \cite{Brunnler2003,Brunnler2009} and Stouppa \cite{Stouppa2007}, are also largely syntactic. There exists also deep inference for hybrid logic (\cite{Strassburger2007}). The syntax of deep sequents is defined by assuming the negation normal form in classical modal logic and nested sequents. The central idea of deep inference is that deep structures are transformed into appropriate shapes at any position in a derivation that allows the transformation. It has turned out that cut-free sequent calculi can be developed systematically and modularly for normal modal logics.

As often is the case, what is syntactic and what semantic may interestingly overlap, as is the case in the hybrid and two-sided approaches. Also in the graphical and diagrammatic systems the distinction between syntax and semantics is not, and was not originally meant to be by Peirce, razor-sharp, which professes to gain some flexibility when dealing with some more complicated and non-standard systems.

The aim of the present paper is to provide a different kind of deep inference system for normal modal logics. The language is given by Peirce's alpha and gamma graphs as presented in his theory of existential graphs (see e.g.\ \cite{Pietarinen2004,Pietarinen2006,Roberts1973,Zeman1964}). Graphs are scribed on the sheet of assertion. Inference rules are formulated as transformation rules from one graph to another graph. In non-modal propositional logic (alpha graphs) and first-order logic (beta graphs), there are basically only two general kinds of transformations: insertions to the graphs and erasures from the graphs. In graphical modal logic, there are two additional kinds of transformations: merges and splits. In a sense also merges and splits are instances of the operations of insertions and erasures. Thus the fundamental proof rules also in the modal extensions of graphical logic can be classified into two general classes. As usual, these operations are allowed only in certain positions in a graph. It is the notion of a position that is made explicit in graphical logic. This makes such graphical calculi the natural home for deep inference.

Peirce's theory of existential graphs was generalized into conceptual graphs by Sowa~\cite{Sowa1984} in 1984. Since then conceptual graphs have been widely used within artificial intelligence and cognitive science. Diagrammatic reasoning and their history and philosophy has been studied for many years (see e.g.~\cite{AB1996,Pietarinen2011,Pietarinen2016}). As far as modal logics are concerned, van den Berg~\cite{Berg93} defines a graphical system for modal logic $\mbf{K}$ which is complete with respect to the Hilbert-style axiomatic system of $\mbf{K}$. Bra\"uner~\cite{Brauner98} defines a Peircean graphical system for the modal logic $\mbf{S5}$, which is also complete with respect to the Hilbert-style axiomatic system of $\mbf{S5}$. This type of graphical system is also extended by Bra{\"u}ner and {\O}hrstr{\o}m~\cite{Brauner99} to modal logics $\mbf{S4}$ and $\mbf{KD45}$. In distinction from the above works, the graphical systems for modal logics presented in this paper are shown to be equivalent to algebraic sequent systems. This means that a range of modal graphical systems can be developed in a systematic and modular fashion.

\section{The syntax of modal graphs}
We fix a denumerable set of simple propositions $\msf{Prop}$ the elements of which are primitive graphs. They occur in a compound graph as basic parts. According to Peirce, the sheet of assertion, or the blank where nothing is scribed on it, is also a primitive graph. It corresponds to tautology $\top$. Henceforth, we denote the blank by {\sf SA} or omit it altogether when no confusion arises. A {\em primitive graph} is a simple proposition or the blank ({\sf SA}).

The modal graphs are defined inductively from primitive graphs using two special notations: the {\em continuous cut} {\cut{\ \ \ }} and the {\em broken cut} {\bcut{\ \ \ }}\ . The continuous cut means negation. The broken cut means logical contingency (non-necessity). The continuous and broken cuts are called {\em primitive cuts} uniformly. There are four combinations of cuts:

(1) Double continuous cut: {\cut{\cut{\ \ \ }}}\ ;

(2) Double broken cut: {\dbcut{\ \ \ }}\ ;

(3) Possibility cut: {\pcut{\ \ \ }}\ ;

(4) Necessity cut: {\ncut{\ \ \ }}\ .
\\
The compound cuts consist of two cuts, one nested within the other, with nothing between them. The two primitive cuts and the four compound cuts stated above are called {\em cuts} uniformly. They are used as single graph operations that form new graphs from the given ones.

\begin{definition}
 The set of all {\em modal graphs} $\mathscr{G}_M$ is defined inductively by:
\[
\mathscr{G}_M\ni G::= p \mid {\sf SA} \mid \cut{\ G\ } \mid \bcut{\ G\ }\mid G_1\,G_2
\]
where $p\in \mathsf{Prop}$. The graphs \cut{\ $G$\ } and \bcut{\ $G$\ } are read as ``the continuous cut of $G$" and ``the broken cut of $G$" respectively. The graph $G_1\,G_2$ is called the {\em juxtaposition} of $G_1$ and $G_2$ on the sheet of assertion.
\end{definition}

Henceforth, when we talk about graphs we mean modal graphs. Given two graphs $G$ and $H$, we define shorthand notations $G\ovee H$, $G \supset H$ and $G\equiv H$ as below:
\[
G\ovee H:= \cut{\ \cut{\ G\ } \ \cut{\ H\ }\ }\ ;~ G\supset H:= \cut{\ G \ \cut{\ H\ }\ }\ ;~  G\equiv H:= \cut{\ G \ \cut{\ H\ }\ }\ \cut{\ H \ \cut{\ G\ }\ }\ .
\]

\begin{definition}
For any graph $G$, the {\em parsing tree} of $G$, denoted by $T(G)$, is defined inductively as follows:
\begin{enumerate}
\item $T({p})$ is a single root node $p$.
\item $T({\sf SA})$ is a single root node ${\sf SA}$.
\item $T(G_1G_2)$ is a root node $G_1G_2$ with children nodes $T(G_1)$ and $T(G_2)$.
\item $T(\cut{\ G\ })$ is a root node $\cut{\ G\ }$ with one child node $T(G)$.
\item $T(\bcut{\ G\ })$ is a root node $\bcut{\ G\ }$ with one child node $T(G)$.
\end{enumerate}
A {\em partial graph} of a graph $G$ is a node in $T(G)$.
\end{definition}

For any graph $G$, the {\em history} of a node $J$ in $T(G)$, denoted by $h(J)$, is the unique path from the root to $J$. The position of the root is always on the sheet of assertion. We say that $J$ is a positive (negative) node of $T(G)$ if there is an even (odd) number of cuts in $h(J)$.

A {\em position} is a point on the area of a graph (but not on the boundary of the cut). Given any graph $G$, a position in $G$ is {\em positive} ({\em negative}) if it is enclosed by an even (odd) number of cuts. Graph are scribed at positions. No two graphs, or their parts, can be scribed at the same position.

A {\em graph context} is a graph $G\{~\}$ with a single {\em slot} $\{~\}$, the empty context, which can be filled by other graphs. The notation $G\{H\}$ stands for the graph obtained from the graph context $G\{~\}$ by filling the slot by $H$. An occurrence of a graph $J$ in a graph $G$ is called {\em positive} ({\em negative}), notation $G\{J^+\}$ ($G\{J^-\}$), if it is a positive (negative) node in $T(G)$.

\section{The graphical calculi $\mbf{K}_g$}

Graphical calculi for modal logics are presented by graphical rules. In general, a graphical rule is of the form
\[
\frac{~G~}{~H~}
\]
where $G$ and $H$ are graphs. The graph $G$ is called the {\em premiss}, and $H$ is called the {\em conclusion}.

On the sheet of assertion, the syntax of graphs becomes diagrammatic. This means that the syntax is two-dimensional, it has no separate notation for parentheses, and that its well-formed graphs are scribed in the ambient space which is continuous, compact, open and non-oriented. The following equalities can be thought of as identifying graphs:
\[
(\mrm{PM})~G\{H_1H_2\}=G\{H_2H_1\};~
(\mrm{AS})~G\{H_1(H_2H_3)\}=G\{(H_1H_2)H_3\}.
\]
The permutation (PM) says that to distinguish positions of $H_1$ and $H_2$ in a partial graph $H_1H_2$ of $G$ has no significance.
The associativity (AS) says that the order of forming the graphs indicated by the parentheses in these rules is likewise immaterial.
After all, these equalities follow from the basic properties of the space and therefore need no separate statement in the system. Likewise, if two graphs, $G$ and $H$, are asserted on the sheet of assertion, the the juxtaposition of them, $G\;H$, is at once also asserted.

The continuous and broken cuts have different meanings in general. However, the continuous cut of {\sf SA} is tantamount to the broken cut of {\sf SA} in the sense that it is impossible to falsify a tautology. Hence we assume the following equality:
\[
(\mrm{Normality})\quad~\cut{\ \ \ }\ =\ \bcut{\ \ \ }
\]
This equality says that contradiction is impossible. Its algebraic meaning is the normality condition in modal algebras (Section 5).

\begin{definition}
The graphical calculus $\mbf{K}_g$ for the minimal normal modal logic $\mbf{K}$ consists of the following axiom and graphical rules:
\begin{enumerate}
\item Axiom: \quad\quad\quad\quad {\sf SA}\quad (The Sheet of Assertion)
\item Alpha rules:
\begin{itemize}
\item {\em Deletion}:
\[
\frac{~G\{H^+\}~}{~G\{{\sf SA}\}~}{(\mrm{DEL})}
\]
Every positive partial graph $H$ in a graph $G$ can be deleted.
\item {\em Insertion}:
\[
\frac{~G\{H^-\}~}{~G\{(JH)^-\}~}{(\mrm{INS})}
\]
Any graph can be inserted into a negative position in a graph $G$.
\item {\em Double cut}:
\[
\AxiomC{$G\{H\}$}
\RightLabel{$(\mrm{DC1})$}
\UnaryInfC{$G\{\cut{\cut{\ H\ }}\}$}
\DisplayProof
\quad
\AxiomC{$G\{\cut{\cut{\ H\ }}\}$}
\RightLabel{$(\mrm{DC2})$}
\UnaryInfC{$G\{H\}$}
\DisplayProof
\]
Any partial graph $H$ of a graph $G$ can be replaced by the double cut of $H$, and vice versa.
\item {\em Iteration/deiteration}:
\[
\frac{K\{GH\{J\}\}}{K\{GH\{GJ\}\}}{(\mrm{IT})}
\quad
\frac{K\{GH\{GJ\}\}}{K\{GH\{J\}\}}{(\mrm{DEIT})}
\]
where $H\{~\}$ is a broken-cut-free graph context, namely, no broken cut occurs in $H\{~\}$. In a graph $K\{GH\{J\}\}$, the partial graph $G$ can be iterated or deiterated at any position in $H$.
\end{itemize}
\item Modal rules:
\[
\AxiomC{$J\{\ncut{\ GH\ }\}$}
\RightLabel{$(\mrm{K1})$}
\UnaryInfC{$J\{\ncut{\ G\ }~\ncut{\ H\ }\}$}
\DisplayProof
\quad
\AxiomC{$J\{\ncut{\ G\ }~\ncut{\ H\ }\}$}
\RightLabel{$(\mrm{K2})$}
\UnaryInfC{$J\{\ncut{\ GH\ }\}$}
\DisplayProof
\quad
\AxiomC{$J\{(G\supset H)^+ \}$}
\RightLabel{$(\mrm{DMN})$}
\UnaryInfC{$J\{(\bcut{\ H\ }\supset \bcut{\ G\ })^+\}$}
\DisplayProof
\]
$(\mrm{K1})$ and $(\mrm{K2})$ mean that the necessity cut distributes over juxtaposition. We call the rule $(\mrm{K1})$ {\em splitting} and $(\mrm{K2})$ {\em merging}. $(\mrm{DMN})$ is the rule of {\em downward monotonicity}.
\end{enumerate}
\end{definition}

A {\em proof} of a graph $G$ in $\mbf{K}_g$ is a finite sequence of graphs $G_0,\ldots,G_n$ such that $G_n=G$, and each $G_i$ is either {\sf SA} or derived from previous graphs by a rule in $\mbf{K}_g$.
A graph $G$ is {\em provable} in $\mbf{K}_g$, notation $\vdash_{\mbf{K}_g} G$, if it has a proof in $\mbf{K}_g$.
A graphical derivation of $H$ from $G$ is {\em admissible} in $\mbf{K}_g$, if $\vdash_{\mbf{K}_g}G$ implies $\vdash_{\mbf{K}_g} H$.

\begin{remark}
The restriction on the context $H\{~\}$ in (IT) and (DEIT) rules is significant. Iteration/de-iteration in a modal context may lead to invalid inferences. For example, consider the following two inferences where the rules (IT) and (DEIT) are applied into the broken cut:
\[
\frac{p~\pcut{\ q\ }\supset p~\pcut{\ q\ }}{p~\pcut{\ q\ }\supset p~\pcut{\ pq\ }}{(\mrm{I})}
\quad
\frac{p~\bcut{\ q\ }\supset p~\bcut{\ q\ }}{p~\bcut{\ p~q\ }\supset p~\bcut{\ q\ }}{(\mrm{II})}
\]
The premisses of (I) and (II) are valid, but their conclusions are not valid in the algebraic semantics for $\mbf{K}_g$ (Section 5). (I) is a counterexample to the validity of iteration into broken cut, and (II) is a counterexample to the validity of deiteration from a broken cut.
\end{remark}

\begin{lemma}\label{lemma:top}
The graphs $G\supset {\sf SA}$ and $G\supset G$ are derivable in $\mbf{K}_g$.
\end{lemma}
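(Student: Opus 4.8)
The plan is to derive each of the two graphs directly from the axiom {\sf SA}, using only the alpha rules and reading off polarities from the parsing tree as needed. Recall first the unfoldings $G\supset{\sf SA}=\cut{\ G\ \cut{\ {\sf SA}\ }\ }$ and $G\supset G=\cut{\ G\ \cut{\ G\ }\ }$, so in both cases I am aiming for a graph consisting of an outer continuous cut whose area carries a copy of $G$ together with an inner cut.

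For $G\supset{\sf SA}$ I would start from {\sf SA} and apply (DC1) with the empty context, wrapping the blank in a double continuous cut to obtain $\cut{\ \cut{\ {\sf SA}\ }\ }$. The inner graph $\cut{\ {\sf SA}\ }$ now sits inside exactly one cut, hence at a negative position, so the rule (INS) licenses the insertion of an arbitrary graph alongside it. Inserting $G$ there yields $\cut{\ G\ \cut{\ {\sf SA}\ }\ }$, which is precisely $G\supset{\sf SA}$. For $G\supset G$ I would continue from the graph $\cut{\ G\ \cut{\ {\sf SA}\ }\ }$ just obtained: the copy of $G$ lies at the top level of the outer cut, and I want a second copy inside the inner cut. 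This is an iteration step. Taking the outer cut as the ambient context $K\{~\}$ and the inner continuous cut as the broken-cut-free context $H\{~\}$ with its slot filled by {\sf SA}, the rule (IT) iterates $G$ into $H$, giving $\cut{\ G\ \cut{\ G\,{\sf SA}\ }\ }$. Since juxtaposition with the blank is the identity, $G\,{\sf SA}=G$, so this is $\cut{\ G\ \cut{\ G\ }\ }=G\supset G$.

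The derivations are short, and the one point genuinely requiring care is the application of (IT). Its side condition insists that the target context be broken-cut-free, and the Remark above shows that this restriction is not optional: iterating into a broken cut is unsound. Here the target context is a single continuous cut $\cut{\ ~\ }$, which contains no broken cut, so the side condition is satisfied and the step is legitimate; I would make this check explicit. The remaining obligations---that the insertion site in $\cut{\ \cut{\ {\sf SA}\ }\ }$ is negative, and that $G\,{\sf SA}=G$---are immediate from the definitions of polarity and of the blank, and present no obstacle.
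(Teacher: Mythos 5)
Your proposal is correct and follows essentially the same route as the paper: both derivations start from {\sf SA}, apply (DC1) to obtain the double cut of the blank, insert $G$ by (INS) at the negative position inside the outer cut, and (for $G\supset G$) iterate $G$ into the inner continuous cut by (IT), whose broken-cut-free side condition you rightly verify. Your explicit polarity check and the observation that $G\,{\sf SA}=G$ are just the details the paper leaves implicit.
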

\begin{proof}
We have the following proofs:
\[
\AxiomC{{\sf SA}}
\RightLabel{$(\mrm{DC1})$}
\UnaryInfC{$\cut{\ \ \cut{\ \ }\ }$}
\RightLabel{$(\mrm{INS})$}
\UnaryInfC{$\cut{\ G\ \cut{\ \ }\ }$}
\DisplayProof
\quad
\AxiomC{{\sf SA}}
\RightLabel{$(\mrm{DC1})$}
\UnaryInfC{$\cut{\ \ \cut{\ \ }\ }$}
\RightLabel{$(\mrm{INS})$}
\UnaryInfC{$\cut{\ G\ \cut{\ \ \ }\ }$}
\RightLabel{$(\mrm{IT})$}
\UnaryInfC{$\cut{\ G\ \cut{\ G\ }\ }$}
\DisplayProof
\]
This completes the proof.
\end{proof}

\begin{prop}
The following rules are admissible in $\mbf{K}_g$:
\begin{enumerate}
\item De Morgan rules:
\[
\AxiomC{$\cut{\ GH \ }$}
\RightLabel{$(\mrm{DM1})$}
\UnaryInfC{$\cut{\ G \ }\ovee \cut{\ H\ }$}
\DisplayProof
\quad
\AxiomC{$\cut{\ G \ }\ovee \cut{\ H\ }$}
\RightLabel{$(\mrm{DM2})$}
\UnaryInfC{$\cut{\ GH \ }$}
\DisplayProof
\]
\item Contraposition and transitivity rules:
\[
\frac{G\supset H}{\cut{\ H \ }\ \supset\ \cut{\ G\ \ }}{(\mrm{CP})}
\quad
\frac{G\supset H\quad H\supset J}{G\supset J}{(\mrm{TR})}
\]
\item Prefixing and Modus Ponens:
\[
\frac{G}{H\supset G}{(\mrm{PF})}
\quad
\frac{G\quad G\supset H}{H}{(\mrm{MP})}
\]
\item Lattice rules:
\[
\frac{G_i\supset H}{G_1G_2\supset H}{\mrm{(\&L)}}
\quad
\frac{G\supset H\quad G\supset J}{G\supset HJ}{(\mrm{\&R})}
\quad
\frac{G\supset J\quad H\supset J}{G\ovee H\supset J}{(\ovee\mrm{L})}
\]
\[
\frac{G\supset H_i}{G\supset H_1\ovee H_2}{(\ovee\mrm{L})}
\quad
\frac{G\supset H\ovee J}{\cut{\ H\ }\ G\supset J}{(\mrm{NL})}
\quad
\frac{GH\supset J}{H\supset \cut{\ G\ }\ovee J}{(\mrm{NR})}
\]
\item Residuation rules:
\[
\frac{GH\supset J}{G\supset (H\supset J)}{(\mrm{RG1})}
\quad
\frac{G\supset (H\supset J)}{GH\supset J}{(\mrm{RG2})}
\]
\item Distributivity:
\[
\AxiomC{$G\ \cut{\ \cut{\ H\ }\ \cut{\ J\ } \ }$}
\RightLabel{$(\mrm{D1})$}
\UnaryInfC{$\cut{\ \cut{\ G\ H\ }\ \cut{\ G\ J \ } \ }$}
\DisplayProof
\quad
\AxiomC{$\cut{\ \cut{\ G\ H\ }\ \cut{\ G\ J \ } \ }$}
\RightLabel{$(\mrm{D2})$}
\UnaryInfC{$G\ \cut{\ \cut{\ H\ }\ \cut{\ J\ } \ }$}
\DisplayProof
\]
\item Upward monotonicity:
\[
\AxiomC{$G\supset H$}
\RightLabel{$(\mrm{UMN})$}
\UnaryInfC{$\ncut{\ G \ }\supset \ncut{\ H\ }$}
\DisplayProof
\quad
\AxiomC{$G\supset H$}
\RightLabel{$(\mrm{UMP})$}
\UnaryInfC{$\pcut{\ G \ }\supset \pcut{\ H\ }$}
\DisplayProof
\quad
\AxiomC{$G\supset H$}
\RightLabel{$(\mrm{UMDB})$}
\UnaryInfC{$\dbcut{\ G \ }\supset \dbcut{\ H\ }$}
\DisplayProof
\]
\item Replacement of equivalents:
\[
\frac{G\equiv H}{J\{G\}\equiv J\{H\}}{(\mrm{RE})}
\]
\item Necessitation rule:
\[
\frac{G}{\ncut{\ G\ }}{(\mrm{Nec})}
\]
\end{enumerate}
\end{prop}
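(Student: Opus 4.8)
The plan is to handle the roughly fifteen rules in three waves, exploiting the fact that most of them are not genuinely inferential but hold \emph{by definition} of the abbreviations $\ovee$, $\supset$, $\equiv$, once the identities (PM), (AS) and double cut (DC1)/(DC2) are used freely. First I would unfold the abbreviations and check that the premiss and conclusion of (DM1), (DM2), (CP), (RG1), (RG2), (NL) and (NR) are the \emph{same} graph up to permutation and the removal or insertion of a double continuous cut. For instance $\cut{\ G\ }\ovee\cut{\ H\ }$ unfolds to $\cut{\ \cut{\ \cut{\ G\ }\ }\cut{\ \cut{\ H\ }\ }\ }$, which collapses to $\cut{\ GH\ }$ by two applications of (DC2), giving (DM1)/(DM2); likewise $G\supset(H\supset J)$ and $GH\supset J$ coincide after one double cut, yielding (RG1)/(RG2), and $\cut{\ H\ }G\supset J$ is $G\supset H\ovee J$ up to (PM), yielding (NL)/(NR). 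These cases need no derivation, only bookkeeping.

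Second, I would derive the genuinely inferential alpha rules by juxtaposing the provable premisses (two asserted graphs may be asserted together) and rewriting with (IT)/(DEIT), (DC1)/(DC2) and the polarity-sensitive (DEL)/(INS). The patterns are short: (MP) is deiteration of the hypothesis against the antecedent of the scroll, followed by (DC2) and a positive deletion; (\&L) and the right disjunction rule $G\supset H_i\,/\,G\supset H_1\ovee H_2$ are a single (INS) into the negative area of a scroll; (TR) and (\&R) are obtained by iterating the whole second premiss into the inner cut of the first scroll, deiterating the shared partial graph, and deleting the spent material; and (D1) iterates $G$ into each of $\cut{\ H\ }$ and $\cut{\ J\ }$ and then deletes the outer copy. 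Throughout I must keep the (IT)/(DEIT) contexts \emph{broken-cut-free} and verify that every (DEL)/(INS) acts at the correct polarity, which is exactly the point flagged in Remark~1. For ($\ovee$L) and the reverse distributivity (D2) I would not argue directly but bootstrap: ($\ovee$L) is the contrapositive of (\&R) via (CP), and (D2) follows by proving $\vdash(GH\ovee GJ)\supset G\,\cut{\ \cut{\ H\ }\cut{\ J\ }\ }$ from the lattice rules and discharging it with (MP).

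Third come the modal rules, where the single primitive modal ingredient is (DMN). Upward monotonicity is immediate by composition: (UMN) is (DMN) then (CP), (UMP) is (CP) then (DMN), and (UMDB) is (DMN) applied twice, using $\ncut{\ G\ }=\cut{\ \bcut{\ G\ }\ }$ and $\pcut{\ G\ }=\bcut{\ \cut{\ G\ }\ }$. Replacement (RE) is a structural induction on the context $J\{~\}$, reducing to three congruences: $A\equiv B$ implies $AL\equiv BL$ (from (\&L), (\&R) and Lemma~\ref{lemma:top}), $\cut{\ A\ }\equiv\cut{\ B\ }$ (from (CP)), and $\bcut{\ A\ }\equiv\bcut{\ B\ }$ (from (DMN)), each applied in both directions extracted from the two conjuncts of $\equiv$. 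Finally (Nec): from a proof of $G$, prefixing gives ${\sf SA}\supset G$, (UMN) gives $\ncut{\ {\sf SA}\ }\supset\ncut{\ G\ }$, and since (Normality) turns $\ncut{\ {\sf SA}\ }=\cut{\ \bcut{\ {\sf SA}\ }\ }$ into $\cut{\ \cut{\ {\sf SA}\ }\ }={\sf SA}$, a last (DC2) leaves $\ncut{\ G\ }$.

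The main obstacle I expect is not any single derivation but the discipline demanded across them: tracking the even/odd cut-depth so that (DEL)/(INS) and the scroll manipulations stay sound, and respecting the broken-cut-free side condition on (IT)/(DEIT), whose failure is exactly the invalidity exhibited in Remark~1. The one conceptually non-routine point is that the ``hard'' directions --- the reverse distributivity (D2), ($\ovee$L), and the broken-cut congruence inside (RE) --- have no direct graph rewriting and must be threaded through earlier items, so the proof has to be organised in a dependency order (abbreviation identities, then alpha rules, then the (DMN)-based modal layer) in which no rule is invoked before it has been established.
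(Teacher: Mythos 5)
Your proposal is correct, and its overall architecture --- double-cut bookkeeping for the definitional rules (DM), (CP), (RG), (NL), (NR); iteration/deiteration/deletion patterns for the inferential alpha rules; everything modal funnelled through (DMN); a structural induction for (RE) --- is essentially the paper's own: the paper proves (DM1)/(DM2) by two double-cut steps, (TR) by exactly your iterate--deiterate--delete pattern, (D1) by iterating $G$ into both disjuncts and deleting the outer copy, (UMN) as (DMN) plus (CP), and (Nec) from (PF), (UMN) and Normality. Two local differences are worth recording. First, for (D2) the paper stays entirely inside graph rewriting: it iterates the whole premiss to obtain two juxtaposed copies, applies (DEL) four times to reduce them to $\cut{\ \cut{\ G\ }\ \cut{\ G\ }\ }$ and $\cut{\ \cut{\ H\ }\ \cut{\ J\ }\ }$, then deiterates and removes a double cut; you instead bootstrap through the already-established lattice rules, proving $GH\ovee GJ\supset G\,(H\ovee J)$ and discharging with (MP). Both are sound; the paper's is shorter and self-contained, while yours leans on the dependency ordering you impose, which you correctly identify as the organising constraint. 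Second, your induction for (RE) explicitly includes the broken-cut context case, handled contravariantly by two applications of (DMN) to the two conjuncts of $G\equiv H$; the paper's printed induction only lists the empty, continuous-cut and juxtaposition cases, so on this point your argument is actually more complete than the paper's --- and that case is genuinely needed, being the only place where (RE) depends on a modal rule rather than on the alpha fragment.
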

\begin{proof}
For (DM1) and (DM2), we have the following simple proofs that only use the double-cut rules:
\[
\AxiomC{$\cut{\ GH \ }$}
\RightLabel{$(\mrm{DC1})$}
\UnaryInfC{$\cut{\ \cut{\ \cut{\ G\ }\ }\ H \ }$}
\RightLabel{$(\mrm{DC1})$}
\UnaryInfC{$\cut{\ \cut{\ \cut{\ G\ }\ }\ \cut{\ \cut{\ H\ } \ } \ }$}
\DisplayProof
\quad
\AxiomC{$\cut{\ \cut{\ \cut{\ G\ }\ }\ \cut{\ \cut{\ H\ } \ } \ }$}
\RightLabel{$(\mrm{DC2})$}
\UnaryInfC{$\cut{\ \cut{\ \cut{\ G\ }\ }\ H \ }$}
\RightLabel{$(\mrm{DC2})$}
\UnaryInfC{$\cut{\ GH \ }$}
\DisplayProof
\]
(TR) is shown as follows:
\[
\AxiomC{$\cut{\ G\  \cut{\ J\ }\ }$\quad$\cut{\ J\  \cut{\ H\ }\ }$}
\RightLabel{\small $(\mrm{IT})$}
\UnaryInfC{$\cut{\ G\  \cut{\ J\ \cut{\ J\  \cut{\ H\ }\ }\ }\ }$}
\RightLabel{\small $(\mrm{DEIT})$}
\UnaryInfC{$\cut{\ G\  \cut{\ J\ \cut{\ \  \cut{\ H\ }\ }\ }\ }$}
\RightLabel{\small $(\mrm{DEL})$}
\UnaryInfC{$\cut{\ G\  \cut{\ \ \cut{\ \  \cut{\ H\ }\ }\ }\ }$}
\RightLabel{\small $(\mrm{DC2})$}
\UnaryInfC{$\cut{\ G\  \cut{\ H\ }\ }$}
\DisplayProof
\]
For (D1) and (D2), we have the following proofs (\cite{MaPietarinen2016}):
\[
\AxiomC{$G\ \cut{\ \cut{\ H\ }\ \cut{\ J \ } \ }$}
\RightLabel{$(\mrm{IT})$}
\UnaryInfC{$G\ \cut{\ \cut{\ GH\ }\ \cut{\ J \ } \ }$}
\RightLabel{$(\mrm{IT})$}
\UnaryInfC{$G\ \cut{\ \cut{\ GH\ }\ \cut{\ GJ \ } \ }$}
\RightLabel{$(\mrm{DEL})$}
\UnaryInfC{$\cut{\ \cut{\ GH\ }\ \cut{\ GJ \ } \ }$}
\DisplayProof
\quad
\AxiomC{$\cut{\ \cut{\ GH\ }\ \cut{\ GJ\ } \ }$}
\RightLabel{$(\mrm{IT})$}
\UnaryInfC{$\cut{\ \cut{\ GH\ }\ \cut{\ GJ\ } \ }$\quad$\cut{\ \cut{\ GH\ }\ \cut{\ GJ\ } \ }$}
\RightLabel{$(4\mrm{~times~DEL})$}
\UnaryInfC{$\cut{\ \cut{\ G\ }\ \cut{\ G\ } \ }$\quad$\cut{\ \cut{\ H\ }\ \cut{\ J\ } \ }$}
\RightLabel{$(\mrm{DEIT})$}
\UnaryInfC{$\cut{\ \cut{\ G\ } \ }$\quad$\cut{\ \cut{\ H\ }\ \cut{\ J\ } \ }$}
\RightLabel{$(\mrm{DC2})$}
\UnaryInfC{$G\ \cut{\ \cut{\ H\ }\ \cut{\ J\ } \ }$}
\DisplayProof
\]
The rule (RE) is shown by induction on the construction of $J\{~\}$ as follows. Assume $G\equiv H$. If $J\{~\} = \{~\}$, the conclusion is the same as the premiss. Suppose $J\{~\} = \cut{\ J'\{~\}\ }$. By induction hypothesis, we have $J'\{G\}\equiv J'\{H\}$. Then it is easy to show $\cut{\ J'\{G\}\ }\equiv\cut{\ J'\{H\}\ }$. Assume $J\{~\} = J_1J_2\{~\}$. By induction hypothesis, we have $J_2\{G\}\equiv J_2\{H\}$. Then it is easy to show $J_1J_2\{G\}\equiv J_1J_2\{H\}$.

The rule (UMN) is obtained from (DMN) by the rule of contraposition (CP). (Nec) is shown by (PF), (UMN) and (TR).
The other rules are easily shown.
\end{proof}

\begin{theorem}[Cut-elimination]
The following cut-elimination rule
\[
\AxiomC{$J\{\cut{\ G\ \cut{\ G\ }\ }\}$}
\RightLabel{$\mrm{(Cut}$-$\mrm{E)}$}
\UnaryInfC{$J\{ {\sf SA} \}$}
\DisplayProof
\]
is admissible in $\mbf{K}_g$.
\end{theorem}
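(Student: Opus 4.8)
The plan is to recognize that the subgraph being eliminated, $\cut{\ G\ \cut{\ G\ }\ }$, is by definition nothing but $G\supset G$, and that by Lemma~\ref{lemma:top} this graph is a theorem equivalent to the blank ${\sf SA}$. Thus $(\mrm{Cut}$-$\mrm{E})$ merely replaces a tautologous subgraph by ${\sf SA}$, and I would derive its admissibility purely from the rules already shown admissible in the previous Proposition---in particular Replacement of equivalents $(\mrm{RE})$, Modus Ponens $(\mrm{MP})$, Prefixing $(\mrm{PF})$, and Deletion $(\mrm{DEL})$---together with Lemma~\ref{lemma:top}. The key point to stress is that none of these rules is itself derived from $(\mrm{Cut}$-$\mrm{E})$, so there is no circularity.

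First I would establish the equivalence $(G\supset G)\equiv {\sf SA}$. One direction, $(G\supset G)\supset{\sf SA}$, is the instance $X\supset{\sf SA}$ of Lemma~\ref{lemma:top} with $X = G\supset G$. For the converse $ {\sf SA}\supset(G\supset G)$, I would start from the theorem $G\supset G$ (again Lemma~\ref{lemma:top}) and apply $(\mrm{PF})$ with antecedent ${\sf SA}$. Since both implications are theorems, their juxtaposition---which is exactly the graph $(G\supset G)\equiv{\sf SA}$ by the definition of $\equiv$ as $\cut{\ G\supset G\ \cut{\ {\sf SA}\ }\ }\ \cut{\ {\sf SA}\ \cut{\ G\supset G\ }\ }$---is a theorem as well.

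Next I would lift this equivalence through the arbitrary context $J\{~\}$: applying $(\mrm{RE})$ to $(G\supset G)\equiv{\sf SA}$ yields $\vdash_{\mbf{K}_g} J\{G\supset G\}\equiv J\{{\sf SA}\}$, and this is precisely the step that handles placement of the eliminated subgraph at any position, positive or negative, and inside any nesting of continuous and broken cuts. Expanding $\equiv$ as a juxtaposition of two implications and deleting the unwanted conjunct by $(\mrm{DEL})$ (it sits at a positive, top-level position) gives $\vdash_{\mbf{K}_g} J\{G\supset G\}\supset J\{{\sf SA}\}$. Finally, assuming the premiss $\vdash_{\mbf{K}_g}J\{\cut{\ G\ \cut{\ G\ }\ }\}$, that is $\vdash_{\mbf{K}_g} J\{G\supset G\}$, a single application of $(\mrm{MP})$ delivers $\vdash_{\mbf{K}_g} J\{{\sf SA}\}$, which is the desired conclusion.

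I expect the only genuine subtlety to be conceptual rather than computational: recognizing that what is presented as a Gentzen-style cut is, in this graphical deep-inference setting, the removal of the identity graph $G\supset G\equiv{\sf SA}$, so that cut-elimination collapses to replacement of equivalents under an arbitrary context. The one place demanding a little care is the use of $(\mrm{RE})$ across modal (broken-cut) contexts, but this is already licensed by its statement and inductive proof in the previous Proposition, which treats the cut cases uniformly; hence no separate modal argument is needed here.
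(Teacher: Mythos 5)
Your proof is correct and follows essentially the same route as the paper: prove the local equivalence $\cut{\ G\ \cut{\ G\ }\ }\equiv{\sf SA}$, lift it through the arbitrary context $J\{~\}$ via $(\mrm{RE})$, and then detach to obtain $J\{{\sf SA}\}$. The only difference is one of detail---the paper states the equivalence as ``clearly provable'' and closes with $(\mrm{TR})$, whereas you spell out the equivalence via Lemma~\ref{lemma:top} and $(\mrm{PF})$ and close with $(\mrm{DEL})$ plus $(\mrm{MP})$---which is an inessential variation on the same argument.
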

\begin{proof}
Clearly $\cut{\ G\ \cut{\ G\ }\ }\ \equiv\ {\sf SA}$ is provable in $\msf{K}_g$. By (RE), we have $J\{\cut{\ G\ \cut{\ G\ }\ }\}\ \equiv\ J\{{\sf SA}\}$. Assume $\vdash_{\mbf{K}_g}J\{\cut{\ G\ \cut{\ G\ }\ }\}$. By (TR), we have $\vdash_{\mbf{K}_g} J\{ {\sf SA} \}$.
\end{proof}

\section{Extensions}

Extensions of $\mbf{K}_g$ can be obtained by adding some characteristic rules. The formulation of these characteristic rules will make use of the cuts, including the six cuts (two primitive and four combined ones) we introduced in Section~2.
We say that the occurrence of a cut in a graph is {\em positive} ({\em negative}) if it is enclosed evenly (oddly) by primitive cuts (continuous or broken cuts).

A {\em normal modal graphical calculus} is an extension of $\mbf{K}_g$ with a set of graphical rules. Given a set of rules $\Sigma = \{R_i\mid i\in I\}$, the notation $\mbf{K}\Sigma$ denotes the calculus generated by rules in $\Sigma$. Let us have the following rules of transformation as the basic rules for various systems of graphical modal logic:
\begin{enumerate}
\item[$(D)$] Any positive necessity cut can be transformed into a possibility cut. Any negative possibility cut can be transformed into a necessity cut.
\[
\frac{J\{\ncut{\ G\ }^+\}}{J\{\pcut{\ G\ }^+\}}{(D^+)}
\quad
\frac{J\{\pcut{\ G\ }^-\}}{J\{{\ncut{\ G\ }^-}\}}{(D^-)}
\]
\item[$(T)$] Any positive continuous cut can be transformed into a broken cut.
Any negative broken cut can be transformed into a continuous cut.
\[
\frac{J\{\cut{\ G\ }^+\}}{J\{\bcut{\ G\ }^+\}}{(T^+)}
\quad
\frac{J\{\bcut{\ G\ }^-\}}{J\{\scut{\ G\ }^-\}}{(T^-)}
\]
\item[$(4)$] Any positive necessity cut can be doubled. Any negative possibility cut can be doubled.
\[
\frac{J\{\ncut{\ G\ }^+\}}{J\{\ncut{{\scut{\bcut{\ G\ }}}}^+\}}{(4^+)}
\quad
\frac{J\{\pcut{\ G\ }^-\}}{J\{{\pcut{\pcut{\ G\ }}}^-\}}{(4^-)}
\]
\item[$(B)$] Any positive double broken cut can be deleted. Any double broken cut can be inserted into a negative position.
\[
\frac{J\{\dbcut{\ G\ }^+\}}{J\{{G}^+\}}{(B^+)}
\quad
\frac{J\{G^-\}}{J\{{\dbcut{\ G\ }}^-\}}{(B^-)}
\]
\item[$(5)$] Any positive double broken cut can be transformed into a necessity cut. Any negative possibility cut can be transformed into a double broken cut.
\[
\frac{J\{\dbcut{\ G\ }^+\}}{J\{{\ncut{\ G\ }^+}\}}{(5^+)}
\quad
\frac{J\{\ncut{\ G\ }^-\}}{J\{{\dbcut{\ G\ }^-}\}}{(5^-)}
\]
\end{enumerate}

\begin{definition}\label{def:systems}
Let $(X) = \{(X^+), (X^-)\}$ for $X\in \{D, T, 4, B, 5\}$. We define the following graphical calculi:
\begin{align*}
\msf{KD}_g &= \msf{K}_g(D)
&
\msf{KB}_g &= \msf{K}_g(B)
&
\msf{K4}_g &= \msf{K}_g(4)
\\
\msf{K5}_g &= \msf{K}_g(5)
&
\msf{KT}_g &= \msf{K}_g(T)
&
\msf{KDB}_g &= \msf{KD}_g(B)
\\
\msf{KB4}_g &= \msf{KB}_g(4)
&
\msf{KD4}_g &= \msf{KD}_g(4)
&
\msf{KD5}_g &= \msf{KD}_g(5)
\\
\msf{KB5}_g &= \msf{KB}_g(5)
&
\msf{K45}_g &= \msf{K4}_g(5)
&
\msf{KTB}_g &= \msf{KT}_g(B)
\\
\msf{S4}_g &= \msf{KT}_g(4)
&
\msf{S5}_g &= \msf{KT}_g(5)
\end{align*}
\end{definition}
Let $S$ be any one of the systems in Definition~\ref{def:systems}. Let $S^+$
and $S^-$ be the systems obtained from $S$ by dropping the negative and positive rules respectively.

\begin{theorem}
$S^+ = S = S^-$.
\end{theorem}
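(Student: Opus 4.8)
The plan is to reduce the claim to the admissibility of the dropped rules. Since $S^+$ and $S^-$ arise from $S$ by deleting rules, every $S^+$- or $S^-$-proof is already an $S$-proof, so $\vdash_{S^+}G$ and $\vdash_{S^-}G$ each give $\vdash_S G$; the work is in the two converses. Inspecting the five rule pairs, each positive rule $(X^+)$ has the shape $J\{C^+\}/J\{D^+\}$, rewriting a partial graph $C$ into $D$ at a positive position, and each negative rule $(X^-)$ has the shape $J\{A^-\}/J\{B^-\}$, rewriting $A$ into $B$ at a negative position. For $X\in\{D,T,B,5\}$ the two rules act between the very same pair of partial graphs, with $A=D$ and $B=C$, so a single characteristic implication $C\supset D$ controls both; only $X=4$ is different, its positive form doubling a necessity cut while its negative form doubles a possibility cut.

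My main tool will be a deep, one-directional monotonicity lemma: if $\vdash_{\mbf{K}_g}A\supset B$, then for every graph context $J\{~\}$ one has $\vdash_{\mbf{K}_g}J\{A^+\}\supset J\{B^+\}$ and $\vdash_{\mbf{K}_g}J\{B^-\}\supset J\{A^-\}$. I would prove this by induction on the grammar of $J\{~\}$: for a continuous cut apply $(\mrm{CP})$, for a broken cut apply $(\mrm{DMN})$ (both antitone, flipping the polarity of the hole), and for a juxtaposition $K\,J'\{~\}$ apply $(\mrm{\&L})$ and $(\mrm{\&R})$ (monotone, preserving polarity). Tracking the parity of enclosing primitive cuts then delivers the monotone implication at positive holes and the antitone one at negative holes. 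This induction uses only rules already shown admissible in $\mbf{K}_g$ and, in particular, never invokes $(\mrm{IT})$ or $(\mrm{DEIT})$, so their broken-cut-free restriction plays no role.

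With this in hand I would discharge the dropped rules. To see $S^+=S$, derive the characteristic implication inside $S^+$: starting from the $\mbf{K}_g$-theorem $C\supset C$ (Lemma~\ref{lemma:top}), whose inner occurrence of $C$ is positive, a single application of $(X^+)$ produces $\vdash_{S^+}C\supset D$. For $X\in\{D,T,B,5\}$ this is already the implication $B\supset A$ that validates $(X^-)$; for $X=4$ one first instantiates $C$ at a continuous cut and then applies $(\mrm{CP})$, using the identity that a possibility cut is a continuous cut wrapped around a necessity cut, to pass from the doubled-necessity implication to the required doubled-possibility implication. Feeding $B\supset A$ into the monotonicity lemma at the negative hole gives $J\{A^-\}\supset J\{B^-\}$, and $(\mrm{MP})$ with the premiss of $(X^-)$ yields its conclusion; hence every negative rule is admissible in $S^+$. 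The direction $S=S^-$ is symmetric: one application of $(X^-)$ to $A\supset A$ produces $B\supset A$, from which (again via $(\mrm{CP})$ for $X=4$) one recovers $C\supset D$, and monotonicity at the positive hole together with $(\mrm{MP})$ makes every positive rule admissible in $S^-$. Because each dropped rule is recovered from its surviving counterpart together with purely $\mbf{K}_g$ machinery, the argument is modular and applies unchanged when $S$ carries several characteristic axioms, as in $\msf{S5}_g$; therefore $S^+=S=S^-$.

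The main obstacle is the deep monotonicity lemma, where one must verify that polarities and implication directions compose correctly through arbitrarily nested cuts while appealing only to previously established admissible rules; the delicate case is the bare broken cut, for which $(\mrm{DMN})$ supplies exactly the needed antitone step. The only other point demanding care is axiom $4$, whose positive and negative forms are $\Box/\Diamond$-duals rather than literal converses and so require the extra $(\mrm{CP})$ bridge; the remaining single-step derivations of $C\supset D$ for $D,T,B,5$ and the closing $(\mrm{MP})$ applications are routine and uniform.
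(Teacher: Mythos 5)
Your proposal is correct, and its skeleton matches the paper's: both reduce the theorem to admissibility of the dropped rules, both derive the characteristic implication inside the surviving system (the paper derives $\scut{\ G\ }\supset\bcut{\ G\ }$ in $\mbf{KT}^+$, just as you derive $C\supset D$ by one application of $(X^+)$ to the positive occurrence in $C\supset C$), and both then push that implication through the surrounding context and finish with modus ponens. The genuine difference is where the contextual propagation comes from. The paper handles it by a two-case analysis on the single cut immediately enclosing the rewritten occurrence --- a broken cut treated with (DMN), or a continuous cut treated with (CP), in each case sitting at top level next to a juxtaposed $J'$ --- and dismisses everything else (deeper nesting, the other axioms, the self-duality issues) with ``shown similarly.'' Your deep monotonicity lemma (if $\vdash A\supset B$ then $\vdash J\{A^+\}\supset J\{B^+\}$ and $\vdash J\{B^-\}\supset J\{A^-\}$, by induction on $J\{~\}$ using (CP), (DMN), (\&L), (\&R)) is exactly the general statement the paper's two cases instantiate at depth one; it covers arbitrary nesting, so your write-up is in fact the more complete argument. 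You are also right to single out axiom $4$, whose positive and negative rules are $\Box/\Diamond$-duals rather than converses of each other; the paper never spells this out. One small inaccuracy: $\pcut{\ G\ }$ is not literally $\cut{\ \ncut{\ \cut{\ G\ }\ }\ }$ --- the two graphs differ by a double continuous cut --- so your (CP) bridge for $4$ additionally needs (DC1)/(DC2) (or (RE)) to strip that double cut; this is routine and does not affect correctness. Finally, your closing remark about modularity is sound for the right reason: the lemma and the bridge invoke only rules that the paper establishes by uniform derivations from the primitive rules of $\mbf{K}_g$, so they persist in every extension $S^+$ or $S^-$, which is what lets the argument apply unchanged to systems carrying several characteristic rules such as $\msf{S5}_g$ or $\msf{KB4}_g$.
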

\begin{proof}
Consider $\mbf{KT}^+ = \mbf{K}_g{(T^+)}$. It suffices to show that $(T^-)$ is provable in $\mbf{KT}^+$. Assume that $J\{\bcut{\ G\ }^-\}$ is provable in  $\mbf{KT}^+$. There are two cases:

Case 1.\ $J\{\bcut{\ G\ }^-\} = J'~\bcut{\ H\ \bcut{\ G\ }\ }$. First, it is easy to prove $\scut{\ G\ }\supset \bcut{\ G\ }$ in $\mbf{KT}^+$.
Then we have the following proof:
\[
\AxiomC{$\scut{\ G\ }\supset \scut{\ G\ }$}
\RightLabel{(\&L)}
\UnaryInfC{$H~\scut{\ G\ }\supset \scut{\ G\ }$}
\AxiomC{$\scut{\ G\ }\supset \bcut{\ G\ }$}
\RightLabel{(TR)}
\BinaryInfC{$H~\scut{\ G\ }\supset \bcut{\ G\ }$}
\AxiomC{$H\supset H$}
\RightLabel{(\&L)}
\UnaryInfC{$H~\scut{\ G\ }\supset H$}
\RightLabel{(\&R)}
\BinaryInfC{$H~\scut{\ G\ }\supset H~\bcut{\ G\ }$}
\RightLabel{(DMN)}
\UnaryInfC{$\bcut{\ H~\bcut{\ G\ }\ }\supset \bcut{\ H~\scut{\ G\ }\ }$}
\RightLabel{(Alpha rules)}
\UnaryInfC{$J'~\bcut{\ H~\bcut{\ G\ }\ }\supset J'~\bcut{\ H~\scut{\ G\ }\ }$}
\DisplayProof
\]

Case 2.\ $J\{\scut{\ G\ }^-\} = J'~\scut{\ H\ \bcut{\ G\ }\ }$. We have the following proof:
\[
\AxiomC{$\scut{\ G\ }\supset \bcut{\ G\ }$}
\RightLabel{(Alpha rules)}
\UnaryInfC{$H~\scut{\ G\ }\supset H~\bcut{\ G\ }$}
\RightLabel{(CP)}
\UnaryInfC{$\scut{\ H~\bcut{\ G\ }\ }\supset \scut{\ H~\scut{\ G\ }\ }$}
\RightLabel{(Alpha rules)}
\UnaryInfC{$J'~\scut{\ H~\bcut{\ G\ }\ }\supset J'~\scut{\ H~\scut{\ G\ }\ }$}
\DisplayProof
\]
Hence $(T^-)$ is provable in $\mbf{KT}^+$.
The remaining cases of $S$ are shown similarly.
\end{proof}

\section{Graphical and sequent calculi}

The set of all modal formulas $\mathscr{L}_M$ is defined by the following inductive rule:
\[
\mathscr{L}_M\ni\alpha::= p\mid \top\mid\neg \alpha\mid (\alpha\wedge\alpha) \mid \B\alpha,
\]
where $p\in \msf{Prop}$. Other propositional connectives $\bot,\vee, \imp$ and $\bimp$ are defined as usual.
The dual operator of $\D$ is defined as $\D\alpha :=\neg\B\neg\alpha$. A {\em basic sequent} is an expression of the form $\alpha\vdash\beta$.

\begin{definition}
The basic sequent calculus $\mbf{SK}$ consists of the following axioms and rules:
\begin{itemize}
\item[$(1)$] Axioms:
\[
(\mrm{Id})~\alpha\vdash \alpha,\quad (\top)~\alpha\vdash \top,
\quad
(\mrm{D})~\alpha\wedge(\beta\vee\gamma)\vdash(\alpha\wedge\beta)\vee(\alpha\wedge\gamma),
\]
\[
(Em)~\top\vdash \alpha\vee \neg \alpha,
\quad
(\mrm{Gen})~\top\vdash\Box\top,
\quad
(\B\wedge)~\B\alpha\wedge\B\beta\vdash\B(\alpha\wedge\beta).
\]
\item[$(2)$] Rules for propositional connectives:
\[
\AxiomC{$\neg\alpha\vdash \beta$}
\RightLabel{$(\neg\mrm{L})$}
\UnaryInfC{$\neg\beta\vdash \alpha$}
\DisplayProof
\quad
\AxiomC{$\alpha\vdash \neg\beta$}
\RightLabel{$(\neg\mrm{R})$}
\UnaryInfC{$\beta\vdash \neg\alpha$}
\DisplayProof
\quad
\frac{\alpha\vdash \beta\quad \beta\vdash \gamma}{\alpha\vdash \gamma}{(\mrm{Tr})}
\]
\[
\AxiomC{$\alpha_i\vdash \beta$}
\RightLabel{$(\wedge\mrm{L})(i=1,2)$}
\UnaryInfC{$\alpha_1\wedge \alpha_2\vdash \beta$}
\DisplayProof
\quad
\AxiomC{$\beta\vdash \alpha_1$\quad $\beta\vdash \alpha_2$}
\RightLabel{$(\wedge\mrm{R})$}
\UnaryInfC{$\beta\vdash \alpha_1\wedge \alpha_2$}
\DisplayProof
\]
\item[$(3)$] Modal rule:
\[
\frac{\alpha\vdash\beta}{\B \alpha \vdash\B \beta}{(\B)}
\]
\end{itemize}
\end{definition}

By the standard Lindenbaum--Tarski construction, one can easily obtain the following completeness result:

\begin{theorem}\label{theorem:compl}
A sequent is derivable in $\mbf{SK}$ iff it is valid in all modal algebras.
\end{theorem}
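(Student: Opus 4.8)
The plan is to prove both directions of the biconditional by the standard Lindenbaum--Tarski method, building a free modal algebra out of the syntax of $\mbf{SK}$ itself. First I would verify soundness, the easier direction: every axiom of $\mbf{SK}$ corresponds to an equational law (or inequality) holding in all modal algebras, and every rule preserves validity. Concretely, interpret each formula $\alpha$ as a term over the algebra, read $\alpha\vdash\beta$ as the inequality $\alpha\le\beta$ in the lattice order, and check case by case: $(\mrm{Id})$, $(\top)$, $(\mrm{D})$, $(Em)$ give the Boolean-algebra laws (reflexivity, top element, distributivity, complementation), while $(\mrm{Gen})$ and $(\B\wedge)$ are exactly the normality condition $\B\top=\top$ and the meet-preservation $\B\alpha\wedge\B\beta=\B(\alpha\wedge\beta)$ that define the modal operator on a modal algebra. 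For the rules, $(\neg\mrm{L})$, $(\neg\mrm{R})$ reflect antitonicity of complementation, $(\mrm{Tr})$ is transitivity of $\le$, the $(\wedge\mrm{L})$/$(\wedge\mrm{R})$ rules are the universal property of meet, and $(\B)$ is monotonicity of $\B$, which follows from the two defining equations. Hence any derivable sequent is valid in every modal algebra.

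The substantive direction is completeness: every sequent valid in all modal algebras is derivable. For this I would carry out the Lindenbaum--Tarski construction on the set $\mathscr{L}_M$ of modal formulas. Define the relation $\alpha\approx\beta$ to hold iff both $\alpha\vdash\beta$ and $\beta\vdash\alpha$ are derivable in $\mbf{SK}$; using $(\mrm{Id})$ and $(\mrm{Tr})$ this is an equivalence relation, and one must check it is a congruence with respect to $\neg$, $\wedge$, and $\B$. Congruence for $\wedge$ follows from $(\wedge\mrm{L})$ and $(\wedge\mrm{R})$, for $\neg$ from $(\neg\mrm{L})$ and $(\neg\mrm{R})$, and for $\B$ from the modal rule $(\B)$. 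I would then form the quotient algebra $\mathscr{L}_M/{\approx}$ with operations induced by the connectives and order given by $[\alpha]\le[\beta]$ iff $\alpha\vdash\beta$ is derivable. The axioms $(\mrm{Id})$, $(\top)$, $(\mrm{D})$, $(Em)$ guarantee that this quotient is a Boolean algebra, and $(\mrm{Gen})$, $(\B\wedge)$ together with $(\B)$ guarantee that the induced operation on $[\B(\cdot)]$ is a normal, meet-preserving modal operator; hence the quotient is a genuine modal algebra, the \emph{Lindenbaum algebra} of $\mbf{SK}$.

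With this algebra in hand, completeness is immediate. Take the canonical valuation sending each propositional variable $p$ to its own equivalence class $[p]$ in the Lindenbaum algebra; by induction on formula structure this valuation sends every $\alpha$ to $[\alpha]$. Now suppose $\alpha\vdash\beta$ is valid in all modal algebras; in particular it is valid in the Lindenbaum algebra under the canonical valuation, which means $[\alpha]\le[\beta]$, i.e.\ $\alpha\vdash\beta$ is derivable in $\mbf{SK}$ by the very definition of the order. This closes the loop and yields the nontrivial inclusion.

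The main obstacle, and the step deserving the most care, is verifying that $\approx$ is a genuine congruence for the modal operator $\B$ and that the quotient operation is well-defined and normal. Congruence for $\B$ is not automatic from the Boolean structure: it rests precisely on the rule $(\B)$, which must be shown to lift to the quotient (if $[\alpha]=[\beta]$ then $[\B\alpha]=[\B\beta]$), while the defining equations $\B\top=\top$ and $\B\alpha\wedge\B\beta=\B(\alpha\wedge\beta)$ of a modal algebra must be derived in the quotient from $(\mrm{Gen})$ and $(\B\wedge)$ respectively, with the reverse inequality $\B(\alpha\wedge\beta)\vdash\B\alpha\wedge\B\beta$ obtained from $(\B)$ applied to the projections $\alpha\wedge\beta\vdash\alpha$ and $\alpha\wedge\beta\vdash\beta$. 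Everything else reduces to the familiar Boolean Lindenbaum--Tarski argument, which is why the theorem can fairly be asserted to follow ``by the standard construction.''
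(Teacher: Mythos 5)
Your proposal is correct and takes essentially the same approach as the paper: the paper proves Theorem~\ref{theorem:compl} simply by appealing to ``the standard Lindenbaum--Tarski construction,'' and your argument---soundness by case-checking the axioms and rules of $\mbf{SK}$ against the modal-algebra laws, then completeness via the quotient of $\mathscr{L}_M$ by interderivability with the canonical valuation $p\mapsto[p]$---is precisely that construction carried out in detail, including the correct treatment of the congruence and normality conditions for $\B$.
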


We shall present the translations between the modal language $\mathscr{L}_M$ and the graphical language $\mathscr{G}_M$, and then prove the connections between the graphical calculus $\mbf{K}_g$ and the sequent calculus $\mbf{SK}$.

\begin{definition}
The translation $\pi\colon \mathscr{G}_M\imp \mathscr{L}_M$ is defined inductively by
\[
\pi({p}) = p;~ \pi({\sf SA}) = \top;~
\pi(\cut{\ G\ }) = \neg \pi(G);~
\]
\[
 \pi(\bcut{\ G\ }) = \D\neg \pi(G);~
\pi(G_1\;G_2) = \pi(G_1)\wedge\pi(G_2).
\]
The translation $\sigma\colon \mathscr{L}_M\imp \mathscr{G}_M$ is defined inductively by
\[
\sigma({p}) = p;~\sigma(\top) = {\sf SA};~\sigma(\neg\varphi) = \cut{\ \sigma(\varphi)\ }\ ;~
\]
\[\sigma(\Box\varphi) = \ncut{\ \sigma(\varphi)\ }\ ;~
\sigma(\varphi_1\wedge\varphi_2) = \sigma(\varphi_1)\;\sigma(\varphi_2).
\]
\end{definition}

The two translations $\pi$ and $\sigma$ are related to each other. The relationship can be presented by the following result:

\begin{prop}\label{prop:trans}
There are functions $\delta\colon \mathscr{L}_M\imp \mathscr{L}_M$ and $\rho\colon \mathscr{G}_M\imp \mathscr{G}_M$ such that the following diagrams commute:
%
\[
\psset{linewidth=1pt,linestyle=dashed,dash=2.5pt}{
\begin{pspicture}(-1,-0.5)(3,2.5)
\psdot(0,2)\psdot(2,2)\psdot(2,0)
\rput(-0.5,2){$\mathscr{L}_M$}\rput(2.5,2){$\mathscr{G}_M$}\rput(2.5,0){$\mathscr{L}_M$}
\psline{->}(0,2)(2,2)\rput(1,2.2){$\sigma$}
\psline{->}(0,2)(2,0)\rput(0.8,0.8){$\delta$}
\psline{->}(2,2)(2,0)\rput(2.2,1){$\pi$}
\rput(1,-0.5){$(\mrm{I})$}
\end{pspicture}
\quad\quad
\begin{pspicture}(-1,-0.5)(3,2.5)
\psdot(0,2)\psdot(2,2)\psdot(2,0)
\rput(-0.5,2){$\mathscr{G}_M$}\rput(2.5,2){$\mathscr{L}_M$}\rput(2.5,0){$\mathscr{G}_M$}
\psline{->}(0,2)(2,2)\rput(1,2.2){$\pi$}
\psline{->}(0,2)(2,0)\rput(0.8,0.8){$\rho$}
\psline{->}(2,2)(2,0)\rput(2.2,1){$\sigma$}
\rput(1,-0.5){$(\mrm{II})$}
\end{pspicture}
}
\]

i.e., $\pi\circ\sigma = \delta$ and $\sigma\circ\pi = \rho$.
\end{prop}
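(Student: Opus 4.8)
The plan is to let $\delta$ and $\rho$ be the composites themselves, $\delta := \pi\circ\sigma$ and $\rho := \sigma\circ\pi$; with this choice the two triangles commute by definition, since commutativity of $(\mrm{I})$ is exactly the equation $\pi\circ\sigma=\delta$ and commutativity of $(\mrm{II})$ is exactly $\sigma\circ\pi=\rho$. The substantive part of the proof is therefore not the bare existence of $\delta$ and $\rho$, but the extraction of explicit inductive clauses for them, because it is these clauses --- showing that $\delta$ and $\rho$ are the identity on the Boolean skeleton and deviate from it only by harmless double negations, resp.\ double cuts, at the modal positions --- that the later comparison of $\mathbf{K}_g$ with $\mathbf{SK}$ will rely on. Both descriptions are obtained by a routine structural induction: on the formula $\varphi$ for $\delta$, and on the graph $G$ for $\rho$.

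For diagram $(\mrm{I})$ I would compute $\delta=\pi\circ\sigma$ by induction on $\varphi\in\mathscr{L}_M$. Reading off the two translation tables, the base and propositional cases are immediate: $\delta(p)=\pi(p)=p$, $\delta(\top)=\pi({\sf SA})=\top$, $\delta(\neg\varphi)=\pi(\cut{\ \sigma(\varphi)\ })=\neg\delta(\varphi)$, and $\delta(\varphi_1\wedge\varphi_2)=\pi(\sigma(\varphi_1)\,\sigma(\varphi_2))=\delta(\varphi_1)\wedge\delta(\varphi_2)$, so $\delta$ fixes atoms and $\top$ and commutes with $\neg$ and $\wedge$. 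The only case with content is the box: since $\sigma(\B\varphi)=\ncut{\ \sigma(\varphi)\ }$ and the necessity cut is a continuous cut wrapping a broken cut, we get $\delta(\B\varphi)=\pi(\ncut{\ \sigma(\varphi)\ })=\neg\,\pi(\bcut{\ \sigma(\varphi)\ })=\neg\D\neg\,\delta(\varphi)$. Thus $\delta(\B\varphi)=\neg\D\neg\,\delta(\varphi)$, which is $\B\delta(\varphi)$ up to the defining abbreviation $\D\alpha:=\neg\B\neg\alpha$ and the cancellation of a double negation.

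For diagram $(\mrm{II})$ I would compute $\rho=\sigma\circ\pi$ by induction on $G\in\mathscr{G}_M$. Again the non-modal cases are direct: $\rho(p)=p$, $\rho({\sf SA})={\sf SA}$, $\rho(\cut{\ G\ })=\sigma(\neg\pi(G))=\cut{\ \rho(G)\ }$, and $\rho(G_1G_2)=\sigma(\pi(G_1)\wedge\pi(G_2))=\rho(G_1)\,\rho(G_2)$, so $\rho$ fixes primitive graphs and commutes with the continuous cut and with juxtaposition. The crux is the broken cut. Here $\pi(\bcut{\ G\ })=\D\neg\pi(G)=\neg\B\neg\neg\pi(G)$, and applying $\sigma$ while unwinding $\sigma(\B\,\cdot)=\ncut{\ \sigma(\cdot)\ }$ and $\sigma(\neg\,\cdot)=\cut{\ \sigma(\cdot)\ }$ layer by layer produces a graph consisting of a continuous cut around a necessity cut around a double continuous cut on $\rho(G)$; two applications of the double-cut rule $(\mrm{DC2})$ collapse this to $\bcut{\ \rho(G)\ }$. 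Hence $\rho$ too acts as the identity up to provable equivalence.

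The one place demanding care --- and the expected main obstacle --- is precisely these modal clauses. Because $\B$ is encoded by the \emph{compound} necessity cut (a continuous cut wrapping a broken cut) while $\D$ is a \emph{defined} symbol $\neg\B\neg$, pushing a box or a broken cut through the composite forces one to expand both layers and to track the resulting chain of nested cuts and double negations without miscounting the parity of enclosures; everything else is bookkeeping that the induction hypothesis discharges. Finally, it is worth recording the payoff that the explicit forms make visible, namely that $\delta(\varphi)$ is logically equivalent to $\varphi$ in every modal algebra and that $\rho(G)\equiv G$ is provable in $\mathbf{K}_g$, since $\neg\D\neg\alpha$ and $\B\alpha$ agree in all modal algebras and $\cut{\ \cut{\ H\ }\ }$ is interderivable with $H$ by the double-cut rules; this is what allows the two translations to be used interchangeably in the sequel.
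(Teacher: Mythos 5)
Your proposal is correct and takes essentially the same route as the paper: the paper also treats $\delta$ and $\rho$ as explicitly given (``redundant'') inductively defined functions and verifies $\pi\circ\sigma=\delta$ and $\sigma\circ\pi=\rho$ by structural induction on formulas and graphs, with exactly your clause $\rho(\bcut{\ G\ })=\cut{\ \ncut{\ \scut{\scut{\ G\ }}\ }\ }$ for the broken cut. If anything, your clauses are the more careful ones, since the paper's displayed definitions drop the recursive calls (it writes $\delta(\Box\varphi)=\neg\D\neg\varphi$ and $\rho(\cut{\ G\ })=\cut{\ G\ }$ where $\neg\D\neg\delta(\varphi)$ and $\cut{\ \rho(G)\ }$ are meant), an evident typo that your version silently corrects.
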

\begin{proof}
As we are using them later on, let us first define the two (redundant) functions $\delta$ and $\rho$ as follows. Define the function $\delta$ inductively by:
$\delta({p}) = p$,  $\delta(\top) = \top$, $\delta(\varphi_1\wedge\varphi_2) = \delta(\varphi_1)\wedge\delta(\varphi_2)$, and $\delta(\Box\phi) = \neg\D\neg\varphi$.
By induction on the construction of a modal formula $\varphi$ one can easily show $\sigma(\pi(\varphi))=\delta(\varphi)$. Hence $(\mrm{I})$ commutes.
Define the function $\rho$ inductively as follows:
\begin{align*}
\rho({p}) &= p, & \rho({\sf SA}) &= {\sf SA}, &\rho(G_1\;G_2) &= G_1\;G_2, \\\rho(\cut{\ G\ }) &= \cut{\ G\ }, & \rho({\bcut{\ G\ }}) &= \cut{\ \ncut{\ {\scut{\scut{\ G\ }}}\ }\ }\ . & &
\end{align*}
By induction on the construction of a graph $G$ one can easily show that $\sigma(\pi(G))=\rho(G)$. Hence $(\mrm{II})$ commutes.
\end{proof}

A {\em formula context} is a formula structure $\alpha\{~\}$ with a single slot $\{~\}$ which can be filled with a formula. Let $\alpha\{\beta\}$ be the formula obtained from $\alpha\{~\}$ by filling the slot by $\beta$. The notation $\alpha\{\beta^+\}$ stands for that $\beta$ is positive in $\alpha$, i.e., $\beta$ is in the scope of an even number of negation symbols. Similarly we use the notation $\alpha\{\beta^-\}$.

\begin{lemma}\label{lemma:holeSK}
The following hold in $\mbf{SK}$:
\begin{itemize}
\item[$(1)$] if $\alpha\{\beta^+\}$ and $ \beta\vdash_{\mbf{SK}}\gamma$, then $\alpha\{\beta\} \vdash_{\mbf{SK}} \alpha\{\gamma\}$.
\item[$(2)$] if $\alpha\{\beta^-\}$ and $\beta\vdash_{\mbf{SK}}\gamma$, then $\alpha\{\gamma\} \vdash_{\mbf{SK}} \alpha\{\beta\}$.
\item[$(3)$] if $\beta\vdash_{\mbf{SK}}\gamma$ and $\gamma\vdash_{\mbf{SK}}\beta$, then $\alpha\{\beta\} \vdash_{\mbf{SK}} \alpha\{\gamma\}$ and $\alpha\{\gamma\} \vdash_{\mbf{SK}} \alpha\{\beta\}$.
\end{itemize}
\end{lemma}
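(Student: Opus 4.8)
The plan is to prove statements $(1)$ and $(2)$ simultaneously by induction on the construction of the formula context $\alpha\{~\}$, and then to obtain $(3)$ as an immediate corollary. The connectives that can generate a context are $\neg$, $\wedge$ and $\B$ (the atoms $p$ and $\top$ contain no slot), so besides the base case $\alpha\{~\}=\{~\}$ there are exactly three inductive cases. The reason $(1)$ and $(2)$ must be coupled in one induction is negation, which flips the polarity of the slot: under a negation a positive occurrence becomes negative and conversely, so the positive statement for $\neg\alpha'\{~\}$ has to appeal to the negative statement for $\alpha'\{~\}$, and vice versa.

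In the base case $\alpha\{~\}=\{~\}$ the slot is positive, so $(2)$ is vacuous and $(1)$ is exactly the hypothesis $\beta\vdash_{\mbf{SK}}\gamma$. For the box case $\alpha\{~\}=\B\alpha'\{~\}$ the slot keeps its polarity; if it is positive I would apply the induction hypothesis $(1)$ to get $\alpha'\{\beta\}\vdash_{\mbf{SK}}\alpha'\{\gamma\}$ and then the modal rule $(\B)$ to conclude $\B\alpha'\{\beta\}\vdash_{\mbf{SK}}\B\alpha'\{\gamma\}$, the negative subcase being symmetric via $(2)$. For the conjunction case $\alpha\{~\}=\alpha'\{~\}\wedge\delta$ (the association $\delta\wedge\alpha'\{~\}$ is analogous) the slot again keeps its polarity; in the positive subcase, from $\alpha'\{\beta\}\vdash_{\mbf{SK}}\alpha'\{\gamma\}$ I would derive $\alpha'\{\beta\}\wedge\delta\vdash_{\mbf{SK}}\alpha'\{\beta\}$ and $\alpha'\{\beta\}\wedge\delta\vdash_{\mbf{SK}}\delta$ by $(\wedge\mrm{L})$, compose the first with the hypothesis by $(\mrm{Tr})$, and recombine by $(\wedge\mrm{R})$, the negative subcase again being symmetric.

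The delicate case is negation, $\alpha\{~\}=\neg\alpha'\{~\}$, where the polarity flips. Here I first need contraposition in $\mbf{SK}$: from $\varphi\vdash_{\mbf{SK}}\psi$ one obtains $\neg\psi\vdash_{\mbf{SK}}\neg\varphi$. This is the one step that is not wholly routine, because $\mbf{SK}$ provides only the one-directional negation rules $(\neg\mrm{L})$ and $(\neg\mrm{R})$. I would first establish double negation, namely $\neg\neg\varphi\vdash_{\mbf{SK}}\varphi$ and $\varphi\vdash_{\mbf{SK}}\neg\neg\varphi$, by applying $(\neg\mrm{L})$ and $(\neg\mrm{R})$ respectively to the instance $\neg\varphi\vdash\neg\varphi$ of $(\mrm{Id})$; contraposition then follows by composing $\neg\neg\varphi\vdash_{\mbf{SK}}\varphi$ with $\varphi\vdash_{\mbf{SK}}\psi$ through $(\mrm{Tr})$ and applying $(\neg\mrm{L})$. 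Granting contraposition, if the slot is positive in $\neg\alpha'\{~\}$ then it is negative in $\alpha'\{~\}$, so the induction hypothesis $(2)$ gives $\alpha'\{\gamma\}\vdash_{\mbf{SK}}\alpha'\{\beta\}$, and contraposition yields $\neg\alpha'\{\beta\}\vdash_{\mbf{SK}}\neg\alpha'\{\gamma\}$, which is $(1)$ for $\alpha\{~\}$; the subcase where the slot is negative uses hypothesis $(1)$ and contraposition in the same way.

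Finally, $(3)$ is immediate from $(1)$ and $(2)$. Given $\beta\vdash_{\mbf{SK}}\gamma$ and $\gamma\vdash_{\mbf{SK}}\beta$, a positive slot is handled by applying $(1)$ in each direction and a negative slot by applying $(2)$ in each direction, so both $\alpha\{\beta\}\vdash_{\mbf{SK}}\alpha\{\gamma\}$ and $\alpha\{\gamma\}\vdash_{\mbf{SK}}\alpha\{\beta\}$ hold irrespective of the polarity of the slot. I expect the main obstacle to be packaging the negation case cleanly, that is, isolating contraposition (and hence double negation) as a preliminary lemma and keeping the polarity bookkeeping consistent across the simultaneous induction.
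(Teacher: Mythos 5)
Your proof is correct and takes essentially the same approach as the paper's: a simultaneous induction on the structure of $\alpha\{~\}$ for $(1)$ and $(2)$, appealing to the other clause when the negation case flips polarity, using $(\wedge\mathrm{L})$, $(\mathrm{Tr})$, $(\wedge\mathrm{R})$ for conjunction and $(\Box)$ for the modal case, with $(3)$ an immediate consequence. Your explicit derivation of contraposition (via double negation, $(\mathrm{Tr})$ and $(\neg\mathrm{L})$) merely fills in a step that the paper's negation case uses tacitly.
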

\begin{proof}
By induction on the construction of $\alpha\{~\}$.
We sketch the proof of (1) and (2) by simultaneous induction. The case $\alpha\{~\} = \{~\}$ is obvious. Suppose $\alpha\{\beta\} := \neg\alpha'\{\beta\}$ and $\beta\vdash_\mbf{SK}\gamma$. There are two cases:

Case 1.\ $\neg\alpha'\{\beta^+\}$. Then $\alpha'\{\beta^-\}$. By induction hypothesis, we have $\alpha'\{\gamma\}\vdash_\mbf{SK}\alpha'\{\beta\}$. Then $\neg\alpha'\{\beta\}\vdash_\mbf{SK}\neg\alpha'\{\gamma\}$.

Case 2.\ $\neg\alpha'\{\beta^-\}$. Then $\alpha'\{\beta^+\}$. By induction hypothesis, we have $\alpha'\{\beta\}\vdash_\mbf{SK}\alpha'\{\gamma\}$. Then $\neg\alpha'\{\gamma\}\vdash_\mbf{SK}\neg\alpha'\{\beta\}$.

The case $\alpha\{~\} = \alpha_1\{~\}\wedge\alpha_2$ or
$\alpha\{~\} = \alpha_1\wedge\alpha_2\{~\}$ is obvious. Suppose $\alpha\{~\} = \Box\alpha'\{~\}$ and $\beta\vdash_\mbf{SK}\gamma$.
Assume $\Box\alpha'\{\beta^+\}$. Then by induction hypothesis we have $\alpha'\{\beta\}\vdash_\mbf{SK}\alpha'\{\gamma\}$.
Then by $(\Box)$ we have $\Box\alpha'\{\beta\}\vdash_\mbf{SK}\Box\alpha'\{\gamma\}$. The case for $\Box\alpha'\{\beta^-\}$ is similar.
\end{proof}

\begin{lemma}\label{lemma:g-sk}
For any graph $G$,
if $\vdash_{\mbf{K}_g} G$, then $\top\vdash_\mbf{SK} \pi(G)$.
\end{lemma}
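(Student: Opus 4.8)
The plan is to induct on the length of a $\mbf{K}_g$-proof $G_0,\dots,G_n = G$, establishing the invariant $\top\vdash_{\mbf{SK}}\pi(G_i)$ for every line. The base case is $G_i = {\sf SA}$, where $\pi({\sf SA}) = \top$ and $\top\vdash_{\mbf{SK}}\top$ by $(\mrm{Id})$. For the inductive step I would first record a small auxiliary fact that is immediate from the clauses defining $\pi$: the translation sends a graph context $G\{~\}$ to a formula context $\pi(G)\{~\}$ with a single slot, and it preserves polarity, a positive (negative) occurrence in $G$ becoming a positive (negative) occurrence in $\pi(G)$. This holds because $\cut{~}$ contributes one negation, juxtaposition a conjunction, and $\bcut{~}$ contributes $\D\neg(\cdot)=\neg\B\neg\neg(\cdot)$, i.e.\ an odd number of negations, matching the single cut it represents. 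With this in hand every contextual rule of $\mbf{K}_g$ becomes a replacement at a formula position of known polarity, and the engine for transporting a local $\mbf{SK}$-derivability through the context is Lemma~\ref{lemma:holeSK}.

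The alpha rules are then routine. For $(\mrm{DEL})$ the slot is positive and the local step is $\pi(H)\vdash_{\mbf{SK}}\top$, which is the axiom $(\top)$; Lemma~\ref{lemma:holeSK}(1) gives $\pi(G\{H\})\vdash_{\mbf{SK}}\pi(G\{{\sf SA}\})$ and $(\mrm{Tr})$ closes the step against the induction hypothesis. For $(\mrm{INS})$ the slot is negative and the local step is $\pi(J)\wedge\pi(H)\vdash_{\mbf{SK}}\pi(H)$, an instance of $(\wedge\mrm{L})$, transported by Lemma~\ref{lemma:holeSK}(2). For $(\mrm{DC1})$ and $(\mrm{DC2})$ the local content is the double-negation equivalence $\alpha\vdash_{\mbf{SK}}\neg\neg\alpha$ and conversely, both provable from $(\mrm{Id})$ via $(\neg\mrm{L})$ and $(\neg\mrm{R})$, and transported by Lemma~\ref{lemma:holeSK}(3). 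For $(\mrm{IT})$ and $(\mrm{DEIT})$ the broken-cut-free restriction is exactly what makes the argument go through: since each compound modal cut contains a broken cut, a broken-cut-free context contains no modal cut at all, so it translates to a box-free formula context $\chi\{~\}$, and the required equivalence $\pi(G)\wedge\chi\{\psi\}$ against $\pi(G)\wedge\chi\{\pi(G)\wedge\psi\}$ is a purely classical propositional equivalence. It is provable in the Boolean fragment of $\mbf{SK}$ (or, if preferred, obtained from Theorem~\ref{theorem:compl}, being valid in every Boolean algebra), and transported by Lemma~\ref{lemma:holeSK}(3).

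The modal rules carry the real content. For $(\mrm{K1})$ and $(\mrm{K2})$ the local equivalence is $\B(\pi(G)\wedge\pi(H))$ against $\B\pi(G)\wedge\B\pi(H)$ (up to the double negations introduced by $\ncut{~}$): the left-to-right direction follows from $(\B)$ applied to the two instances of $(\wedge\mrm{L})$ followed by $(\wedge\mrm{R})$, and the right-to-left direction is precisely the axiom $(\B\wedge)$; Lemma~\ref{lemma:holeSK}(3) then finishes. The genuinely delicate rule, and the step I expect to be the main obstacle, is $(\mrm{DMN})$. After translation its local content is box monotonicity: one passes from the premiss subgraph $G\supset H$, with $\pi(G\supset H)\equiv\pi(G)\imp\pi(H)$, to the conclusion subgraph $\bcut{H}\supset\bcut{G}$, with $\pi(\bcut{H}\supset\bcut{G})\equiv\B\pi(G)\imp\B\pi(H)$.

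The difficulty is that this passage is available in $\mbf{SK}$ only as the \emph{rule} $(\B)$, namely from $\pi(G)\vdash_{\mbf{SK}}\pi(H)$ infer $\B\pi(G)\vdash_{\mbf{SK}}\B\pi(H)$, and not as the internal implication $(\pi(G)\imp\pi(H))\vdash_{\mbf{SK}}(\B\pi(G)\imp\B\pi(H))$, which is not $\mbf{SK}$-derivable (it already fails in the modal algebra of a two-point irreflexive chain). Consequently Lemma~\ref{lemma:holeSK} cannot be applied to $(\mrm{DMN})$ in the same plug-and-play fashion as for the other rules, since there is no local interderivable pair to feed it. The crux of the proof is therefore to show that, in the positive context at hand, the contextual application of $(\mrm{DMN})$ nevertheless preserves $\mbf{SK}$-derivability by reducing it to the genuine rule $(\B)$; reconciling an inference rule with deep contextual replacement is the part that will demand the most care, and once it is secured the induction closes and the lemma follows.
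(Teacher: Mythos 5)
Your overall strategy is the paper's own proof: induction on the length of the $\mbf{K}_g$-proof, a polarity-preserving reading of $\pi$ on graph contexts, and Lemma~\ref{lemma:holeSK} to transport local (inter)derivabilities through formula contexts. Your treatment of the alpha rules and of $(\mrm{K1})/(\mrm{K2})$ coincides with the paper's case analysis, and is in fact more explicit than the paper's (which dismisses the alpha rules as ``easy''), e.g.\ on why the broken-cut-free restriction in $(\mrm{IT})/(\mrm{DEIT})$ makes the local equivalence purely Boolean.

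The one case you do not close is $(\mrm{DMN})$, and that is a genuine gap in your write-up: you end by hoping that the deep application can ``nevertheless'' be reduced to the rule $(\B)$, without exhibiting the reduction. But your diagnosis of why the plug-and-play method must fail there is exactly right, and you should know that the paper does not close this case either: its proof of case (2) asserts ``Clearly, $\neg(\pi(H)\wedge\neg\pi(K))\vdash_{\mbf{SK}}\neg(\D\neg\pi(K)\wedge\neg\D\neg\pi(H))$'' (correcting two evident typos), which is precisely the sequent $(\pi(H)\imp\pi(K))\vdash_{\mbf{SK}}(\B\pi(H)\imp\B\pi(K))$ that you observe is underivable, since it fails in modal algebras. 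So the paper's proof is broken at exactly the step you flagged as the crux; it is not a proof but an assertion of the false local entailment.

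Moreover, no argument can close that case as the rule stands, because deep $(\mrm{DMN})$ is unsound and the lemma is false as stated. Concretely, the graph $\cut{\ q\ \cut{\ \cut{\ p\ \cut{\ q\,p\ }\ }\ }\ }$, i.e.\ $q\supset(p\supset q\,p)$, is provable in $\mbf{K}_g$ using $(\mrm{DC1})$, $(\mrm{INS})$ and $(\mrm{IT})$ alone; the occurrence of $p\supset q\,p$ in it is enclosed by two continuous cuts, hence positive, so $(\mrm{DMN})$ applies and yields $\cut{\ q\ \cut{\ \cut{\ \bcut{\ q\,p\ }\ \cut{\ \bcut{\ p\ }\ }\ }\ }\ }$, whose $\pi$-translation is $\mbf{SK}$-interderivable with $q\wedge\B p\imp\B(q\wedge p)$. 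That formula fails in the complex algebra of the two-world frame $wRv$ with $p$ true only at $v$ and $q$ true only at $w$, so by Theorem~\ref{theorem:compl} the corresponding sequent is not $\mbf{SK}$-derivable, contradicting the lemma. The lemma (and the soundness half of the paper's algebraic completeness theorem) survives only if $(\mrm{DMN})$ is restricted to shallow applications, where the slot of $J\{~\}$ is enclosed by no cut; notably, all uses of $(\mrm{DMN})$ elsewhere in the paper (e.g.\ in deriving $(\mrm{UMN})$) are of this shallow form, and for the shallow rule your sketch is the correct argument: extract $\pi(G)\vdash_{\mbf{SK}}\pi(H)$ from the premiss, apply the rule $(\B)$, and re-internalize the implication.
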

\begin{proof}
Assume $\vdash_{\mbf{K}_g} G$. Let $G_0,\ldots, G_n = G$ be a proof of $G$. We show $\top\vdash_\mbf{SK} \pi(G_i)$ by induction on $i\leq n$. If $G_i$ is {\sf SA}, clearly we have $\top\vdash_\mbf{SK} \pi({G_i})$. Assume that $G_i$ is obtained from $G'$ by a rule $({R})$. If $({R})$ is an alpha rule, it is easy to get the conclusion by induction hypothesis and Lemma~\ref{lemma:holeSK}. Suppose that $({R})$ is a modal rule.

(1).\ $({R})=(\mrm{K1})$ or $(\mrm{K2})$. Let $G_i = J\{\ncut{\ H\ }~\ncut{\ K\ }\}$ and $G'=J\{\ncut{\ H K\ }\}$. By induction hypothesis, we have $\top\vdash_\mbf{SK} \pi(J)\{\pi(\ncut{\ H K\ })\}$, i.e., $\top\vdash_\mbf{SK} \pi(J)\{\neg\D\neg(\pi(H)\wedge\pi(K))\}$.
Clearly $\neg\D\neg(\pi(H)\wedge\pi(K))\vdash_\mbf{SK} \neg\D\neg\pi(H)\wedge \neg\D\neg \pi(K)$
and $\neg\D\neg\pi(H)\wedge \neg\D\neg \pi(K)\vdash_\mbf{SK} \neg\D\neg(\pi(H)\wedge\pi(K))$. By Lemma~\ref{lemma:holeSK} (3), we get $\top\vdash_\mbf{SK} \pi(G_i)$. The case for $(\mrm{K2})$ is similar.

(2).\ $({R})=(\mrm{DMN})$. Let $G_i=J\{(\bcut{\ K\ } \supset \bcut{\ H\ })^+\}$ and $G' = J\{(H\supset K)^+\}$. By induction hypothesis, we have $\top\vdash_\mbf{SK} \pi(J\{(H\supset K)^+\})$, i.e., $\top\vdash_\mbf{SK} \pi(J)\{\neg (\pi(H)\wedge \pi(K))\}$. Clearly, $\neg (\pi(H)\wedge \neg \pi(K)) \vdash_\mbf{SK} \neg (\D\neg \pi(K)\wedge \D\neg \neg \pi(H))$. By Lemma~\ref{lemma:holeSK} (1), we get $\top\vdash_\mbf{SK} \pi(G_i)$.
\end{proof}

\begin{lemma}\label{lemma:sk-g}
For any formula $\alpha$,
if $\top\vdash_\mbf{SK}\alpha$, then $\vdash_{\mbf{K}_g}\sigma(\alpha)$.
\end{lemma}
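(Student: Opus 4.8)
The plan is to prove a uniform strengthening and then specialise. I would first establish: for every basic sequent, if $\alpha\vdash_{\mbf{SK}}\beta$ then $\vdash_{\mbf{K}_g}\sigma(\alpha)\supset\sigma(\beta)$. The statement of the lemma then follows by instantiating $\alpha=\top$: since $\sigma(\top)={\sf SA}$ is the blank, $\sigma(\top)\supset\sigma(\beta)$ is the double cut $\cut{\ \cut{\ \sigma(\beta)\ }\ }$, so combining the provable axiom ${\sf SA}$ with $(\mrm{MP})$ (equivalently, removing the double cut by $(\mrm{DC2})$) yields $\vdash_{\mbf{K}_g}\sigma(\beta)$. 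The reason to strengthen is that an $\mbf{SK}$-derivation passes through arbitrary sequents $\gamma\vdash\delta$, so an induction confined to sequents with $\top$ on the left would not be available; the strengthened claim is proved by induction on the $\mbf{SK}$-derivation of $\alpha\vdash\beta$.

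In the inductive (rule) cases, each rule of $\mbf{SK}$ is matched, under $\sigma$, by an admissible rule from the preceding Proposition. Transitivity $(\mrm{Tr})$ is handled by $(\mrm{TR})$, $(\wedge\mrm{L})$ by $(\&\mrm{L})$, $(\wedge\mrm{R})$ by $(\&\mrm{R})$, and the modal rule $(\B)$ by $(\mrm{UMN})$, using $\sigma(\B\varphi)=\ncut{\ \sigma(\varphi)\ }$. For $(\neg\mrm{L})$ and $(\neg\mrm{R})$ I would apply contraposition $(\mrm{CP})$ to the induction hypothesis and then clear the resulting double cut $\cut{\ \cut{\ \sigma(-)\ }\ }$ via the double-cut rules and $(\mrm{RE})$, recovering the required implication. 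These are routine rewrites.

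The base cases carry the content. $(\mrm{Id})$ and $(\top)$ are exactly $\sigma(\alpha)\supset\sigma(\alpha)$ and $\sigma(\alpha)\supset{\sf SA}$, both given by Lemma~\ref{lemma:top}. For $(Em)$ and $(\mrm{Gen})$ I would first prove the target graph outright and then prefix the blank by $(\mrm{PF})$: the graph $\sigma(\alpha)\ovee\cut{\ \sigma(\alpha)\ }$ is, by the double-cut rules together with $(\mrm{RE})$, equivalent to $\sigma(\alpha)\supset\sigma(\alpha)$, while $\ncut{\ {\sf SA}\ }$ collapses to ${\sf SA}$ by $(\mrm{Normality})$ followed by a double cut. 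The modal axiom $(\B\wedge)$ translates to $\ncut{\ \sigma(\alpha)\ }\,\ncut{\ \sigma(\beta)\ }\supset\ncut{\ \sigma(\alpha)\,\sigma(\beta)\ }$, and the distributivity axiom $(\mrm{D})$ translates to the graph $G\,\cut{\ \cut{\ H\ }\ \cut{\ J\ }\ }\supset\cut{\ \cut{\ GH\ }\ \cut{\ GJ\ }\ }$, writing $G=\sigma(\alpha)$, $H=\sigma(\beta)$, $J=\sigma(\gamma)$.

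The hard part will be turning the whole-graph facts $(\mrm{K2})$ (merging) and $(\mrm{D1})$ (distributivity) into these implicational graphs, since as admissible rules they only transmit provability of entire graphs. The device I would use exploits that the underlying rules are deep: $(\mrm{K1})$ and $(\mrm{K2})$ apply in an arbitrary context, while the $(\mrm{IT})$ and $(\mrm{DEL})$ steps witnessing $(\mrm{D1})$ apply at any positive, broken-cut-free position — and the occurrences in question here are exactly of that kind, as no broken cut appears. Starting from the instance $A\supset A$ of Lemma~\ref{lemma:top}, namely $\cut{\ A\ \cut{\ A\ }\ }$, the inner occurrence of $A$ lies at a positive position, so I can replay at that occurrence the very sequence of rules that transforms $A$ into $B$, arriving at $\cut{\ A\ \cut{\ B\ }\ }=A\supset B$ with $A=\ncut{\ G\ }\ncut{\ H\ }$, $B=\ncut{\ GH\ }$ for merging and $A=G\,\cut{\ \cut{\ H\ }\ \cut{\ J\ }\ }$, $B=\cut{\ \cut{\ GH\ }\ \cut{\ GJ\ }\ }$ for distributivity. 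Checking that the polarity and broken-cut-free provisos hold at each replayed step is the delicate point; once it is secured, every remaining case reduces to the admissible rules already at hand.
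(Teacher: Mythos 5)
The paper omits this proof entirely, saying only that it goes ``by induction on the derivation of $\top\vdash\alpha$ in $\mbf{SK}$''; your proposal is a correct realization of exactly that strategy, and the strengthening to arbitrary sequents (from $\alpha\vdash_{\mbf{SK}}\beta$ conclude $\vdash_{\mbf{K}_g}\sigma(\alpha)\supset\sigma(\beta)$) is precisely the form the induction must take, with the rule cases matched by the admissible rules $(\mrm{TR})$, $(\mrm{\&L})$, $(\mrm{\&R})$, $(\mrm{CP})$ and $(\mrm{UMN})$ as you say. Your treatment of the axioms $(\mrm{D})$ and $(\B\wedge)$ — replaying the derivations behind $(\mrm{D1})$ and the deep rule $(\mrm{K2})$ inside the positive slot of $\cut{\ A\ \cut{\ A\ }\ }$ — also checks out, because the added surroundings only enlarge the unrestricted outer context $K\{~\}$ of $(\mrm{IT})$/$(\mrm{K2})$ and shift depths by two, preserving the parity condition that $(\mrm{DEL})$ needs.
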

\begin{proof}
By induction on the derivation of $\top\vdash\alpha$ in $\mbf{SK}$. The proof is omitted.
\end{proof}

\begin{lemma}\label{lemma:rho}
For any graph $G$,
$\vdash_{\mbf{K}_g} G$ iff $\vdash_{\mbf{K}_g} \rho(G)$.
\end{lemma}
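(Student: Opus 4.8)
The plan is to reduce the biconditional to the single structural identity $\vdash_{\mbf{K}_g} G \equiv \rho(G)$ and then transfer theoremhood across this equivalence. Since $\rho = \sigma\circ\pi$ by Proposition~\ref{prop:trans}, $\rho$ leaves primitive graphs, continuous cuts and juxtapositions essentially unchanged and only rewrites a broken cut $\bcut{\ G\ }$ into $\cut{\ \ncut{\ \scut{\scut{\ G\ }}\ }\ }$; so all the real content concentrates at this one clause.

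First I would prove, by induction on the construction of $G$, that $\vdash_{\mbf{K}_g} G \equiv \rho(G)$. The base cases ($p$ and {\sf SA}) are trivial, since there $\rho(G)=G$ and $G\equiv G$ is provable (it is the juxtaposition of two copies of $G\supset G$ from Lemma~\ref{lemma:top}). The constructor cases for $\cut{\ G\ }$ and $G_1G_2$ are pure congruence steps: $\equiv$ is a provable reflexive, symmetric and transitive relation, and by the replacement rule (RE) it is a congruence, so the induction hypotheses $G_i\equiv\rho(G_i)$ propagate through the cut and juxtaposition. The only substantive case is the broken cut. Here the induction hypothesis gives $G\equiv\rho(G)$, whence $\bcut{\ G\ }\equiv\bcut{\ \rho(G)\ }$ by (RE), and it remains to establish the fixed identity $\bcut{\ H\ }\equiv\cut{\ \ncut{\ \scut{\scut{\ H\ }}\ }\ }$. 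Since $\ncut{\ \cdot\ }$ is a continuous cut of a broken cut, the right-hand graph is a double continuous cut of $\bcut{\ \scut{\scut{\ H\ }}\ }$; one use of (DC2) strips that outer double cut, and a second use of (DC2), carried inside the broken-cut context by (RE) from the provable equivalence $\scut{\scut{\ H\ }}\equiv H$, strips the inner double cut around $H$. This leaves $\bcut{\ H\ }$, establishing the identity, and transitivity then yields $\bcut{\ G\ }\equiv\rho(\bcut{\ G\ })$.

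Next I would convert the equivalence into the two needed implications. Reading $G\equiv\rho(G)$ as the juxtaposition $\cut{\ G\ \cut{\ \rho(G)\ }\ }\ \cut{\ \rho(G)\ \cut{\ G\ }\ }$, each factor is a positive partial graph at top level, so (DEL) erases either one and leaves $\vdash_{\mbf{K}_g} G\supset\rho(G)$ and $\vdash_{\mbf{K}_g}\rho(G)\supset G$. Admissibility of modus ponens (MP) then closes both directions: if $\vdash_{\mbf{K}_g} G$, then (MP) with $G\supset\rho(G)$ gives $\vdash_{\mbf{K}_g}\rho(G)$; symmetrically, $\vdash_{\mbf{K}_g}\rho(G)$ together with $\rho(G)\supset G$ gives $\vdash_{\mbf{K}_g} G$.

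The only delicate point is the broken-cut identity, and in particular the legitimacy of the two double-cut cancellations in their contexts: unlike (IT)/(DEIT), the rules (DC1)/(DC2) carry no broken-cut restriction, so they may be applied inside a broken cut, and (RE) licenses performing the inner cancellation under $\bcut{\ \cdot\ }$. I expect this to be the main obstacle, but it is presentational rather than conceptual—verifying the double-cut manipulations and the congruence properties of $\equiv$—with everything else reducing to bookkeeping with the already-established admissible rules.
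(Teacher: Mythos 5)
Your proof is correct, but it takes a genuinely different route from the paper's. The paper dispatches this lemma with a one-line hint---``by induction on the proof of $G$ in $\mbf{K}_g$'', i.e.\ a rule-by-rule simulation of derivations, with all details omitted---whereas you argue by structural induction on $G$ that the equivalence $G\equiv\rho(G)$ is itself provable in $\mbf{K}_g$, and then transfer provability across it with (DEL) and (MP). (Note that this presupposes the recursive reading of $\rho$, i.e.\ $\rho(\cut{\ G\ })=\cut{\ \rho(G)\ }$, $\rho(G_1G_2)=\rho(G_1)\rho(G_2)$, etc.; the paper's displayed clauses are literally non-recursive, but the recursive reading is the one the proof of Proposition~\ref{prop:trans} actually uses, and you correctly handle that harder reading.) Your transfer mechanism is exactly the one the paper itself employs to prove cut-elimination---a provable equivalence, (RE), and detachment---so your argument stays entirely within the paper's stock of admissible rules. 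What your route buys: both halves of the ``iff'' fall out symmetrically from the single equivalence, whereas proof-induction is clean for $\vdash G\Rightarrow\;\vdash\rho(G)$ but awkward in reverse, since a proof of $\rho(G)$ need not pass through graphs in the image of $\rho$. What the paper's route would buy instead is an explicit step-by-step proof transformation, which carries proof-structural information that your theoremhood transfer does not. Three tacit appeals in your write-up are legitimate but worth flagging: (i) applying (DC1)/(DC2) inside broken cuts is indeed allowed---only (IT)/(DEIT) carry the broken-cut restriction, and this asymmetry is precisely why the key identity is provable; (ii) you invoke (RE) for a broken-cut context, which is covered by the statement of the admissible-rules proposition even though the paper's own proof of (RE) omits that case (it can be filled using (DMN)); (iii) you need closure of provability under juxtaposition of two separately proved graphs (for reflexivity of $\equiv$ and for assembling the equivalence from its two implications), a property the paper assumes implicitly whenever it treats two-premiss admissible rules such as (TR) and (MP).
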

\begin{proof}
By induction on the proof of $G$ in $\mbf{K}_g$. The proof is omitted.
\end{proof}

\begin{theorem}\label{thm:embedding}
For any graph $G$,
$\vdash_{\mbf{K}_g} G$ iff $\top\vdash_\mbf{SK} \pi(G)$.
\end{theorem}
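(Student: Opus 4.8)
The plan is to read the theorem as a corollary that glues together the three translation lemmas with the commutativity of diagram $(\mrm{II})$ from Proposition~\ref{prop:trans}; the biconditional then factors cleanly into its two directions, and no new combinatorics on graphs is required.

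For the forward direction, suppose $\vdash_{\mbf{K}_g} G$. This is precisely the hypothesis of Lemma~\ref{lemma:g-sk}, whose conclusion is $\top\vdash_\mbf{SK}\pi(G)$. So I would invoke that lemma verbatim and be done; all the rule-by-rule simulation of a $\mbf{K}_g$-proof inside $\mbf{SK}$ has already been carried out there.

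For the converse, suppose $\top\vdash_\mbf{SK}\pi(G)$. Taking $\alpha:=\pi(G)$ in Lemma~\ref{lemma:sk-g} gives $\vdash_{\mbf{K}_g}\sigma(\pi(G))$. By Proposition~\ref{prop:trans} diagram $(\mrm{II})$ commutes, i.e.\ $\sigma\circ\pi=\rho$, so $\sigma(\pi(G))=\rho(G)$ and therefore $\vdash_{\mbf{K}_g}\rho(G)$. Finally, Lemma~\ref{lemma:rho} states that $\vdash_{\mbf{K}_g} G$ iff $\vdash_{\mbf{K}_g}\rho(G)$; reading it from right to left yields $\vdash_{\mbf{K}_g} G$, which completes this direction and hence the equivalence.

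The assembly above is genuinely two lines of bookkeeping; the real content lives in the lemmas it consumes, and that is where I would expect any difficulty to sit. In particular, Lemma~\ref{lemma:sk-g} (whose proof is deferred in the excerpt) must simulate every axiom and rule of $\mbf{SK}$ graphically under $\sigma$---including the modal rule $(\B)$ and the axioms $(\mrm{Gen})$ and $(\B\wedge)$---which is precisely where the admissible rules collected in the Proposition (notably $(\mrm{Nec})$, $(\mrm{UMN})$, and the lattice and residuation rules) earn their keep. Likewise, Lemma~\ref{lemma:rho} requires showing that the redundant context by which $\rho$ rewrites each broken cut is provably reversible in $\mbf{K}_g$, leaning on the double-cut rules together with upward and downward monotonicity. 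Granting those two facts, I would present the theorem exactly as the short chain above, with the forward half citing Lemma~\ref{lemma:g-sk} and the backward half citing Lemmas~\ref{lemma:sk-g} and~\ref{lemma:rho} through Proposition~\ref{prop:trans}.
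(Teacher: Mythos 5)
Your proposal is correct and follows exactly the same route as the paper's own proof: the forward direction is Lemma~\ref{lemma:g-sk} verbatim, and the converse chains Lemma~\ref{lemma:sk-g}, the identity $\sigma\circ\pi=\rho$ from Proposition~\ref{prop:trans}, and Lemma~\ref{lemma:rho}. Your added remarks about where the substantive work hides (in Lemmas~\ref{lemma:sk-g} and~\ref{lemma:rho}, whose proofs the paper omits) are accurate but do not change the fact that the assembly is identical.
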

\begin{proof}
The `only if' part is obtained by Lemma~\ref{lemma:g-sk}. Assume $\top\vdash_\mbf{SK} \pi(G)$. By Lemma~\ref{lemma:sk-g}, we have $\vdash_{\mbf{K}_g} \sigma\circ\pi(G)$. By Proposition~\ref{prop:trans}, $\vdash_{\mbf{K}_g} \rho(G)$. By Lemma~\ref{lemma:rho}, $\vdash_{\mbf{K}_g} G$.
\end{proof}

\begin{definition}
A {\em modal algebra} is an algebra $\mbf{A}=(A,\wedge, \neg, \B, 1)$ where $(A, \wedge, \neg, 1)$ is a Boolean algebra, and $\B$ is a unary operator on $A$ satisfying the conditions:
\begin{enumerate}
\item Additivity: for all $a,b\in A$, $\B(a\wedge b)=\B a \wedge \B b$;
\item Normality: $\B 1 = 1$.
\end{enumerate}
\end{definition}

Any formula $\alpha$ is interpreted as a function $\alpha^\mbf{A}$ in a modal algebra $\mbf{A}$. A sequent $\alpha\vdash\beta$ is {\em valid} in $\mbf{A}$ if $\alpha^\mbf{A}\leq \beta^\mbf{A}$ whatever elements of $A$ are assigned to variables in $\alpha$ or $\beta$.
By the standard Lindenbaum--Tarski construction, one can show the completeness of $\mbf{SK}$ with respect to the class of all modal algebras, i.e., $\alpha\vdash_\mbf{SK}\beta$ if and only if $\alpha\vdash\beta$ is valid in all modal algebras (Theorem~\ref{theorem:compl}).

A graph $G$ is interpreted as the function $G^\mbf{A} = \pi(G)^\mbf{A}$. A graph $G$ is {\em valid} in a modal algebra $\mbf{A}$ if $\top\vdash \pi(G)$ is valid in $\mbf{A}$. Then one can obtain the following completeness result:

\begin{theorem}
A graph $G$ is provable in $\mbf{K}_g$ iff it is valid in all modal algebras.
\end{theorem}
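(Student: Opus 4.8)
The plan is to obtain this completeness result as an immediate consequence of two facts already established: the syntactic embedding of $\mbf{K}_g$ into $\mbf{SK}$ (Theorem~\ref{thm:embedding}) and the algebraic completeness of $\mbf{SK}$ (Theorem~\ref{theorem:compl}). The bridge between them is the definition of graph validity, which declares a graph $G$ valid in a modal algebra $\mbf{A}$ exactly when the sequent $\top\vdash\pi(G)$ is valid in $\mbf{A}$. No new induction or algebraic construction is needed at this stage; I would simply compose the relevant biconditionals.

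Concretely, I would argue by the following chain of equivalences. Starting from provability,
\[
\vdash_{\mbf{K}_g} G \iff \top\vdash_\mbf{SK}\pi(G)
\]
by Theorem~\ref{thm:embedding}. Next, since $\top\vdash\pi(G)$ is a sequent, Theorem~\ref{theorem:compl} gives
\[
\top\vdash_\mbf{SK}\pi(G) \iff \top\vdash\pi(G)\ \text{is valid in every modal algebra.}
\]
Finally, unfolding the definition of graph validity, the right-hand side asserts precisely that $G$ is valid in every modal algebra. Composing these three biconditionals yields the theorem.

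The only point to check carefully is that the quantifier over algebras lines up on both sides: ``valid in all modal algebras'' applied to the sequent $\top\vdash\pi(G)$ means validity in each $\mbf{A}$, while ``valid in all modal algebras'' applied to the graph $G$ is \emph{defined} to mean the same per-algebra condition on $\top\vdash\pi(G)$. Once this is observed the two notions coincide verbatim, so there is in fact no genuine obstacle here. All the substance of the result lies in the earlier embedding (Theorem~\ref{thm:embedding}), which rests on Proposition~\ref{prop:trans} together with Lemmas~\ref{lemma:g-sk}, \ref{lemma:sk-g} and \ref{lemma:rho}, and in the Lindenbaum--Tarski argument underlying Theorem~\ref{theorem:compl}; the present theorem merely packages these together to transfer completeness from the sequent calculus to the graphical calculus.
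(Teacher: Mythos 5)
Your proof is correct: the chain of biconditionals you compose is exactly available from Theorem~\ref{thm:embedding}, Theorem~\ref{theorem:compl}, and the definition of graph validity, and the quantifier alignment you check at the end is indeed the only point where something could go wrong. The paper's proof differs in one respect: it only uses the reduction to $\mbf{SK}$ for the completeness direction (arguing contrapositively that $\not\vdash_{\mbf{K}_g} G$ gives $\top\not\vdash_{\mbf{SK}}\pi(G)$, hence an algebra $\mbf{A}$ with $1\not\leq\pi(G)^{\mbf{A}}$), while it establishes soundness separately, by a direct induction on proofs in $\mbf{K}_g$ verifying that each graphical rule preserves algebraic validity. Your version is more economical, since soundness falls out of the ``only if'' half of Theorem~\ref{thm:embedding} (i.e.\ Lemma~\ref{lemma:g-sk}) combined with the soundness half of Theorem~\ref{theorem:compl}, so no new induction is needed; what the paper's direct induction buys instead is a self-contained check that each graphical rule --- including the delicately restricted iteration/deiteration rules --- is individually valid in modal algebras, without routing that fact through the translation $\pi$ and the sequent calculus. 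Both arguments are sound; yours merely leans harder on the two cited theorems, which is legitimate since neither of them depends on the present statement.
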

\begin{proof}
The soundness is shown by induction on the proof of $G$. For completeness, assume $\not\vdash_{\mbf{K}_g} G$. By Theorem~\ref{thm:embedding}, we have $\top\not\vdash_{\mbf{SK}} \pi(G)$. By the completeness of $\mbf{SK}$, there is a modal algebra $\mbf{A}$ with $1\not\leq \pi(G)^\mbf{A}$. Then $G$ is not valid in $\mbf{A}$.
\end{proof}

For any set of modal formulas $\Sigma$, let $\Sigma^\leq = \{\top\vdash\alpha\mid \alpha\in \Sigma\}$. Then we have the basic sequent calculus $\mbf{SK}\Sigma^\vdash$ which is obtained from $\mbf{SK}$ by adding all sequents in $\Sigma^\leq$ as axioms. Let $\mbf{Alg}(\Sigma)$ be the class of all modal algebras that validate all sequents in $\Sigma^\vdash$. Then the sequent system $\mbf{SK}\Sigma^\vdash$, if consistent, is sound and complete with respect to $\mbf{Alg}(\Sigma)$.

For any set of modal formulas $\Sigma$, consider the set of graphical rules $\Sigma^g = \{\top\vdash \sigma(\alpha)\mid \alpha\in \Sigma\}$. Let $\mbf{K}_g\Sigma^g$ be the graphical calculus obtained from $\mbf{K}_g$ by adding all rules in $\Sigma^g$.

For $\Sigma \sub \{D, T, 4, B, 5\}$, where $D = \D\top$, $T=\Box \alpha\imp \alpha$, $4 = \Box \alpha\imp \Box\Box \alpha$, $B= \alpha\imp \Box\D \alpha$ and $5=\D\alpha\imp\Box\D\alpha$,
one can show that the calculus $\mbf{K}_g\Sigma^g$ is equivalent to $\mbf{SK}\Sigma^\vdash$ by the translation $\pi$.
The proof is similar to Theorem \ref{thm:embedding}. Moreover,
the graphical calculi $\mbf{K}_g\Sigma^g$ are sound and complete with respect to $\mbf{Alg}(\Sigma)$.

\section{Conclusion}
Graphical calculi for modal logics developed in the present paper are systematic and modular. They are modal graphical versions of Gentzen-style sequent systems. They follow closely Peirce's original presentation in another sense as well: the rules arise systematically from Peirce's presentation of broken-cut gamma graphs and their rules (R 467, 478). Only (DMN), (B) and (5) are new.\footnote{We find Peirce's own remarks suggesting that he was not keen to have (B) or (5) as rules in his modal gamma systems: ``There is not much utility in a {\em double broken cut}. Yet it may be worth notice that $\dbcut{\ g\ }$ and $g$ can neither of the be inferred from the other. The outer of the two broken cuts is not only relative to a state of information but to a state of reflection. The graph $\dbcut{\ g\ }$ asserts that it is possible that the truth of the graph $g$ is necessary. It is only because I have not sufficiently reflected upon the subject that I can have any doubt of whether it is so or not'' (R 467, 1903). The rule (5) uses a principle that is contrary to Peirce's own rules of ($T^+$) and ($T^-$). Since Peirce's preferred interpretation of the broken cut modality was an epistemic one, he would not have recommended (5) as a good rule for knowledge. The previous quotation indeed continues as follows: ``It becomes evident, in this way, that a modal proposition is a simple assertion not about the universe of things but about the universe of facts that one is in a state of information sufficient to know. The graph $\bcut{\ g\ }$ without any selective, merely asserts that there is a possible state of information in which the knower is not in a condition to know that the graph $g$ is true, while $\ncut{\ g\ }$ asserts that there is no such possible state of information''.} In the basic system $\mbf{K}_g$, identifying a vacant broken-cut with a vacant continuous cut dispenses with necessitation as a primitive rule. Moreover, the basic rules are perfectly symmetrical. Thanks to the diagrammatic syntax, graphs need not assume negation normal form. Thus there are good prospects for developing deep inference proof systems for non-normal and intuitionistic modal logics in a similar fashion. The notions of position in the areas of cuts and the polarity of positions likewise result immediately from the diagrammatic language that these systems are built upon. Thus diagrammatic syntax can be considered to be an advantage when compared to languages and notations that are used in other deep inference systems. Labels are likewise not needed.

As to some other future work, the specific sense of the cut-elimination process suggests that there are interesting decision procedures that we can get from proof searches in the proposed calculi. The desirable property is the subformula property, as well as a syntactic calculation of interpolants, among others.

\subparagraph*{Acknowledgements.}

We want to thank the three reviewers for their helpful comments. The work of the first author is supported by the Chinese National Foundation for Social Sciences and Humanities (grant no.\ 16CZX049). The work of the second author is supported by the Academy of Finland (project 1270335) and the Estonian Research Council (project PUT 1305) (Principle Investigator A.-V.\ Pietarinen).






\end{document}